\documentclass{article}
\usepackage{tikz}
\usepackage{fullpage}
\usepackage{amsmath}
\usepackage{booktabs,amsthm}
\usepackage{amssymb}
\usepackage{amsfonts}
\usepackage{hyperref}
\usepackage{color}
\usepackage{xcolor}
\usepackage{mathbbol}
\usepackage{algpseudocode}
\usepackage{multicol}
\usepackage{relsize}
\usepackage{algorithm2e}
\usepackage{marginnote}
\usepackage{hyperref}
\usepackage{cite}
\usepackage{caption}
%\usetikzlibrary{matrix}
\usetikzlibrary{
            matrix,
            positioning,
        }
% Split at comma command
\newcommand{\sac}[1]{%
  \begingroup
  \begingroup\lccode`~=`, \lowercase{\endgroup
    \edef~{\mathchar\the\mathcode`, \penalty0 \noexpand\hspace{0pt plus 1em}}%
  }\mathcode`,="8000 #1%
  \endgroup
}

\newcommand{\ignore}[1]{}

\usepackage{varwidth}
\DeclareCaptionFormat{myformat}{%
  % #1: label (e.g. "Table 1")
  % #2: separator (e.g. ": ")
  % #3: caption text
  \begin{varwidth}{\linewidth}%
    \centering
    #1#2#3%
  \end{varwidth}%
}

\newtheorem{theorem}{Theorem}

\newtheorem{lemma}[theorem]{Lemma}

\newtheorem{definition}[theorem]{Definition}

\renewcommand{\Pr}{{\bf Pr}}
\newcommand{\E}{{\bf E}}

\title{Optimal Deterministic Group Testing Algorithms \\ to Estimate the Number of Defectives}
\author{{\bf  Nader H. Bshouty}  \\ Dept. of Computer Science\\ Technion,  Haifa, 32000\\ \ \\
{\bf  Catherine A. Haddad-Zaknoon}  \\ Dept. of Computer Science\\ Technion,  Haifa, 32000\\
}
\begin{document}

\maketitle

\abstract{We study the problem of estimating the number of defective items $d$ within a pile of $n$ elements up to a multiplicative factor of $\Delta>1$, using deterministic group testing algorithms. We bring lower and upper bounds on the number of tests required in both the adaptive and the non-adaptive deterministic settings given an upper bound $D$ on the defectives number. For the adaptive deterministic settings, our results show that, any algorithm for estimating the defectives number up to a multiplicative factor of $\Delta$ must make at least $\Omega \left((D/\Delta^2)\log (n/D) \right )$ tests. This extends the same lower bound achieved in~\cite{ALA17} for non-adaptive algorithms. Moreover, we give a polynomial time adaptive algorithm that shows that our bound is tight up to a small additive term.

For non-adaptive algorithms, an upper bound of $O((D/\Delta^2)$ $(\log (n/D)+\log \Delta) )$ is achieved by means of non-constructive proof. This improves the lower bound $O((\log D)/(\log\Delta))D\log n)$ from~\cite{ALA17} and matches the lower bound up to a small additive term.

In addition, we study polynomial time constructive algorithms. We use existing polynomial time constructible \emph{expander regular bipartite graphs}, \emph{extractors} and \emph{condensers} to construct two polynomial time algorithms. The first algorithm makes $O((D^{1+o(1)}/\Delta^2)\cdot \log n)$ tests, and the second makes $(D/\Delta^2)\cdot quazipoly$ $(\log n)$ tests. This is the first explicit construction with an almost optimal test complexity.}

\section{Introduction}
The problem of \emph{group testing} is the problem of identifying or, in some cases, examining the properties of a small amount of items known as  \emph{defective} items within a pile of elements using \emph{group tests}.  Let $X$ be a set of $n$ items, and let $I\subseteq X$ be the set of defective items. A \emph{group test} is a subset $Q\subseteq X$ of items. The result of the test $Q$ with respect to $I$ is defined by $Q(I):=1$ if $Q\cap I \neq \emptyset$ and $Q(I):=0$ otherwise. While the defective set $I$ is unknown to the algorithm, in many cases we might be interested in finding the size of the defective set $|I|$,  or at least an estimation of that value with a minimum number of tests. \ignore{The algorithms considered in this paper might operate in \emph{stages} or \emph{rounds}. In each round, the tests are defined in advance and tested in a single parallel step. Tests on some round might depend on the test results of the preceding rounds. A single round algorithm is called \emph{non-adaptive} algorithm, while a multi-round algorithm is called \emph{adaptive algorithm}}

Group testing was originally proposed as a potential solution for economising mass blood testing during WWII~\cite{Dor43}. Since then, group testing approach has been diversely applied in a wide area of practical applications including DNA library screening~\cite{DH2_00}, product testing quality control\cite{SG59}, file searching in storage systems~\cite{KS64}, sequential screening of experimental variables~\cite{Li62}, efficient contention algorithms for MAC~\cite{KS64,JW85}, data compression~\cite{HL02}, and computations in data stream model~\cite{CM05}. Recently, during the COVID-19 pandemic outbreak, a number of researches adopted the group testing paradigm not only to accelerate  mass testing process, but also to dramatically reduce the number of kits required for testing due to severe shortages in the testing kits supply~\cite{YK20,GC20, MCR20}.

While an up-front knowledge of the value of $d$ or at least an upper bound on it is required in many of the algorithms aimed at identifying the defective items, estimating or finding the number of defectives is an interesting problem on its own as well.  Defectives estimation via group testing has been applied vastly in biological and medical applications~\cite{CWH90, Sw85,Tk62,WHB80, GH89}. In~\cite{Tk62}, for example, group testing algorithms are  used to estimate aster-yellow virus transmitters proportion over the organisms in a natural population of leafhoppers.  Similarly, in~\cite{WHB80}, the authors estimate the infection rate of the yellow-fever virus in mosquito population using group testing methods. On the other hand,  in~\cite{GH89}, group-testing-based estimation of rare diseases prevalence is employed not only for its effectiveness but also because it naturally preserves individual anonymity of the subjects.

Algorithms dedicated for this task  might operate in \emph{stages} or \emph{rounds}. In each round, the tests are defined in advance and tested in a single parallel step. Tests on some round might depend on the test results of the preceding rounds. A single round algorithm is called \emph{non-adaptive} algorithm, while a multi-round algorithm is called \emph{adaptive algorithm}.

In recent years, there has been an increasing interest in the problem of estimating the number of defective items via group testing\cite{BBHKS18, Bshouty19, CWH90, CX14, DS10, DS09,FJOPS16, RT18}. The target in some of these papers is to find an estimation $\hat{d}$ within an additive factor of $\epsilon< 1$ such that $(1-\epsilon)d \leq \hat{d}\leq (1+\epsilon)d$. For randomized adaptive algorithms we have the following results. Falhatgar et.al.~\cite{FJOPS16} give a randomised adaptive algorithm that estimates $d$ using  $2\log\log d +O(1/\epsilon^2\log 1/\delta)$  queries in expectation where $\delta$ is the failure probability of the algorithm. Bshouty et. al.~\cite{BBHKS18} modified this result and gave an algorithm that uses $(1-\delta) \log\log d +O((1/\epsilon^2)\log 1/\delta)$ expected number of queries. Moreover, they proved a lower bound of $(1-\delta)\log\log d +\Omega((1/\epsilon)\log(1/\delta))$ queries.

For randomized non-adaptive algorithms with constant estimation, Damaschke and Sheikh Muhammad give in~\cite{DS09} a randomized non-adaptive algorithm that makes $O((\log(1/\delta))\log n)$ tests and in \cite{Bshouty19}, Bshouty gives the lower bound $\Omega(\log n/\log\log n)$.

In this paper, we are interested in {\emph deterministic} adaptive and non-adaptive algorithms that estimate the defective items set size $d$ up to a multiplicative factor of $\Delta >1$. Formally, let $|I|:=d$ and let $D \geq d$. We say that a deterministic algorithm $\cal A$ estimates $d$ up to a multiplicative factor of $\Delta$ if, given $D$ as an input to the algorithm, it evaluates an estimation $\hat{d}$ such that $d/\Delta \leq \hat{d}\leq d\Delta$. Bshouty et al. show in~\cite{BBHKS18} that, if no upper bound $D$ is given to the algorithm, then any deterministic adaptive algorithm (and therefore also non-deterministic algorithm) for this problem must make at least $\Omega(n)$ tests. This is equivalent to testing all the items. This justifies the fact that any non-trivial efficient algorithm must have some upper bound $D$ for $d$.

Agarwal et.al. ~\cite{ALA17} consider this problem. They first give the lower bound of~$\Omega ((D/\Delta^2)$ $\log ({n}/{D}))$ queries for any non-adaptive deterministic algorithm. Moreover, using a non-constructive proof, they give an upper bound of $O\left ((({\log D})/({\log \Delta}))D\log n \right )$ queries.

We further investigate this problem. We bring new lower and upper bounds on the number of tests required both in adaptive and non-adaptive deterministic algorithms. For the adaptive deterministic settings, our results show that, any algorithm for estimating the defectives number up to a multiplicative factor of $\Delta$ must make at least $\Omega \left((D/\Delta^2)\log (n/D) \right )$ tests. This extends the same lower bound achieved in~\cite{ALA17} for non-adaptive algorithms. Furthermore, we give a polynomial time adaptive algorithm that shows that our bound is tight up to a small additive term.

For non-adaptive algorithms, we achieve an upper bound of $O((D/\Delta^2)$ $(\log (n/D)+\log \Delta))$  by means of non-constructive proof.  This improves the lower bound $O((\log D)/(\log\Delta))D\log n)$ from~\cite{ALA17}, and matches the lower bound up to a small additive term.

We then study polynomial time constructive algorithms. For this task, we use existing polynomial time constructible \emph{expander regular bipartite graphs}, \emph{extractors} and \emph{condensers} to construct two polynomial time algorithms. The first algorithm makes $O((D^{1+o(1)}/\Delta^2)\cdot \log n)$ tests, and the second makes $(D/\Delta^2)\cdot quazipoly$ $(\log n)$ tests. To the best of our knowledge, this is the first explicit construction with an almost optimal test complexity. Our results are summarised in Table~\ref{resTbl}.

\captionsetup{justification=centering, format=myformat}
{\renewcommand{\arraystretch}{1.6}
\begin{table}[h]
\begin{center}
\begin{tabular}{ |c| c|c|c|c| }
\hline
Bounds & Adaptive/ & Result &Explicit/ &Ref.\\
& Non-Adapt.  &  & Non-Expl. &\\
 \hline
\hline
 Lower B. & Non-Adapt. & $\frac{D}{\Delta^2}\log \frac{n}{D}$  &-& \cite{ALA17}\\
 \hline
 Lower B. & Adaptive & $\frac{D}{\Delta^2}\log \frac{n}{D}$ &- & Ours\\
 \hline
 \hline
 Upper B. &Adaptive & $\frac{D}{\Delta^2}\left(\log \frac{n}{D}+\log \Delta\right) $ &Explicit & Ours \\
 \hline
 Upper B. & Non-Adapt. & $\frac{\log D}{\log \Delta} D\log n$ &Non-Expl. &\cite{ALA17}\\
 \hline
 Upper B. & Non-Adapt. & $\frac{D}{\Delta^2}\left(\log \frac{n}{D}+\log \Delta \right)$ &Non-Expl. &Ours\\
  \hline
 Upper B. & Non-Adapt. & $\frac{D^{1+o(1)}}{\Delta^2}\log n $ &Explicit\footnotemark &Ours\\
  \hline
 Upper B. & Non-Adapt. & $\frac{D}{\Delta^2}\cdot{\rm Quazipoly}(\log n)$ &Explicit &Ours\\
 \hline
 \end{tabular}
\end{center}
\caption{Upper and lower bounds on the number of tests required\newline for estimating defectives in deterministic group testing.}
\label{resTbl}
\end{table}
}

 \ignore{
\color{blue}  We will consider the two cases: (1) $\Delta=1+\Theta(1)$ and (2) $\Delta=\omega(1)$. For both cases,  achieve the upper bound
$$O\left(\frac{\log D}{\log \Delta}D\log n\right)$$
and the lower bound
$$\Omega\left(\frac{D}{\Delta^2}\log \frac{n}{D}\right).$$
When $\Delta=1+\Theta(1)$ (case (1)) the bounds are
$$O\left({(\log D)}D\log n\right)$$
and $$\Omega\left({D}\log \frac{n}{D}\right).$$
We will show that the lower bounds in both cases are tight and prove the same lower bounds for any deterministic adaptive algorithm. We also give an adaptive algorithm that achieve those bounds.
\color{black}
}

\section{Definitions and Preliminary Results}
In this section, we give some notations and definition that will be used in this paper.

Let $X=[n]:=\{1,\cdots, n\}$ be a set of \emph{items}. Let $I\subseteq X$ be a set of \emph{defective} items, and let $d$ denote its size, i.e. $d=|I|$. In the group testing settings, a \emph{test} is a subset $Q\subseteq X$ of items. An answer to a test $Q$ with respect to the defective items set $I$, is denoted by $Q(I)$, such that $Q(I):=1$ if $Q\cap I \neq \emptyset$ and $0$ otherwise. We denote by ${\cal O}_I$ an \emph{oracle} that for a test $Q$ returns $Q(I)$.

Let ${\cal A}$ be an algorithm with an access to ${\cal O}_I$, and let $d=|I|$. We say that the algorithm ${\cal A}$ {\it estimates $d$ up to a multiplicative factor of} $\Delta$, if ${\cal A}$ gets as an input an upper bound $D\geq d$ and a parameter $\Delta>1$, and outputs $\hat d$ such that $d/\Delta \leq \hat d \leq d\Delta$.
We say that ${\cal A}$ is an \emph{adaptive} algorithm, if its queries depend on the result of previous queries, and \emph{non-adaptive} if its queries are independent of previous ones and therefore, can be executed in a single parallel step.
\footnotetext[1]{This result is true for $\Delta>C$ for some constant $C$. See section~\ref{lsection}.}
We may assume that $D\geq \Delta^2$, otherwise, the algorithm trivially outputs $\hat d = D/\Delta$. We note here that $\Delta\ge 1+\Omega(1)$, that is, it is greater than a constant that is greater than $1$ and it may depend\footnote{For example $\Delta=\log\log n+\log D$} on $n$ and/or $D$. This is implicit in~\cite{ALA17} and is also constrained in this paper. It is also interesting to investigate this problem when $\Delta=1+o(n)$ where $o()$ (small $o$) is with respect to $D$ and/or $n$.

We will use the following 
\begin{lemma}\label{Chernoff}{\bf Chernoff's Bound}. Let $X_1,\ldots, X_m$ be independent random variables taking values in $\{0, 1\}$. Let $X=\sum_{i=1}^mX_i$ denotes their sum and $\mu = \E[X]$ denotes the sum's expected value. Then
\begin{eqnarray}\label{Chernoff1}\Pr[X>(1+\lambda)\mu]\le \left(\frac{e^{\lambda}}{(1+\lambda)^{(1+\lambda)}}\right)^{\mu}\le e^{-\frac{\lambda^2\mu}{2+\lambda}}\le  \begin{cases} e^{-\frac{\lambda^2\mu}{3}} &\mbox{if\ } 0< \lambda\le 1 \\
e^{-\frac{\lambda \mu}{3}} & \mbox{if\ } \lambda>1 \end{cases} .\label{Chernoff1}
\end{eqnarray}
In particular,
\begin{eqnarray}\label{Chernoff2}\Pr[X>\Lambda]\le \left(\frac{e\mu}{\Lambda}\right)^{\Lambda}.\end{eqnarray}

For $0\le \lambda\le 1$ we have
\begin{eqnarray}\label{Chernoff3}
\Pr[X<(1-\lambda)\mu]\le \left(\frac{e^{-\lambda}}{(1-\lambda)^{(1-\lambda)}}\right)^{\mu}\le e^{-\frac{\lambda^2\mu}{2}}.\label{Chernoff2}
\end{eqnarray}

\end{lemma}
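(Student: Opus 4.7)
The plan is the classical exponential moment (Bernstein--Chernoff) method. For any $s>0$, apply Markov's inequality to the nonnegative variable $e^{sX}$, yielding $\Pr[X>t]\le e^{-st}\,\E[e^{sX}]$. Independence of the $X_i$ gives $\E[e^{sX}]=\prod_i\E[e^{sX_i}]$, and for each Bernoulli $X_i$ with $p_i=\Pr[X_i=1]$, the identity $\E[e^{sX_i}]=1+p_i(e^s-1)$ combined with $1+u\le e^u$ yields $\E[e^{sX}]\le \exp(\mu(e^s-1))$. Setting $t=(1+\lambda)\mu$, the exponent $\mu(e^s-1)-s(1+\lambda)\mu$ is minimized at $s=\ln(1+\lambda)$, producing the sharp form $\left(e^{\lambda}/(1+\lambda)^{1+\lambda}\right)^{\mu}$ that sits at the left of the upper-tail chain in~(\ref{Chernoff1}).

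From this sharp form, the remaining pieces of~(\ref{Chernoff1}) are purely analytic consequences. For the intermediate bound $e^{-\lambda^2\mu/(2+\lambda)}$, take logarithms and verify the scalar inequality $(1+\lambda)\ln(1+\lambda)-\lambda \ge \lambda^2/(2+\lambda)$ for $\lambda\ge 0$ by noting equality at $\lambda=0$ and comparing derivatives; the derivative of the left side equals $\ln(1+\lambda)$, while the derivative of the right side simplifies to $\lambda(4+\lambda)/(2+\lambda)^2$, and one checks $\ln(1+\lambda)$ dominates this on $[0,\infty)$. The two piecewise bounds then follow by replacing $2+\lambda$ with $3$ when $\lambda\le 1$ and with $3\lambda$ when $\lambda>1$. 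For~(\ref{Chernoff2}), substitute $\lambda=\Lambda/\mu-1$ into the sharp bound and use $e^{\lambda\mu}=e^{\Lambda-\mu}\le e^\Lambda$, so the right-hand side collapses exactly to $(e\mu/\Lambda)^\Lambda$.

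The lower-tail statement is symmetric: apply Markov to $e^{-sX}$ for $s>0$, reuse the same MGF bound $\E[e^{-sX}]\le \exp(\mu(e^{-s}-1))$, and optimize at $s=-\ln(1-\lambda)$ to get $\left(e^{-\lambda}/(1-\lambda)^{1-\lambda}\right)^{\mu}$. The further weakening to $e^{-\lambda^2\mu/2}$ reduces to the scalar inequality $(1-\lambda)\ln(1-\lambda)+\lambda \ge \lambda^2/2$ on $[0,1)$, once more a one-variable derivative comparison starting from equality at $\lambda=0$. The only obstacle, which is minor, is the bookkeeping for these elementary scalar inequalities on $\ln(1\pm\lambda)$; conceptually the proof is the textbook Bernstein template, and indeed the lemma is usually quoted rather than reproved.
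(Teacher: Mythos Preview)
The paper does not prove this lemma at all; it is simply stated as a known tool (``We will use the following'') and then quoted throughout Section~\ref{Msection}. So there is no paper proof to compare against, and indeed your closing remark that ``the lemma is usually quoted rather than reproved'' is exactly what the authors do.

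Your argument is the standard Bernstein--Chernoff exponential-moment derivation and is correct. One small point of bookkeeping: in the intermediate step of~(\ref{Chernoff1}), you reduce to the scalar inequality $(1+\lambda)\ln(1+\lambda)-\lambda\ge \lambda^2/(2+\lambda)$ and propose to verify it by comparing first derivatives. Note that the first derivatives $\ln(1+\lambda)$ and $\lambda(4+\lambda)/(2+\lambda)^2$ \emph{also} agree at $\lambda=0$, so a single derivative comparison does not settle it; one must go one layer deeper. Computing once more, the second derivatives are $1/(1+\lambda)$ and $8/(2+\lambda)^3$, and the needed inequality $(2+\lambda)^3\ge 8(1+\lambda)$ now follows cleanly (equality at $0$, and $3(2+\lambda)^2\ge 12>8$). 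With that adjustment your chain for~(\ref{Chernoff1}) is complete, and the derivations of~(\ref{Chernoff2}) and~(\ref{Chernoff3}) are fine as written.
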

Moreover, we will often use the inequality

\begin{eqnarray}
\label{eqbinom}\label{eqbinomSum}
\left(\frac{n}{k}\right)^k\le
{n\choose k}\le \sum_{i=0}^{k}{n\choose i}\le \left(\frac{en}{k}\right)^k,
\end{eqnarray}

\ignore{
\begin{figure}
\begin{tikzpicture}
\matrix[matrix of nodes,nodes={draw=gray, anchor=center,minimum size=0.7cm}] (L) {
  &1 &2 & 3 & 4&    $\cdots$ &8 \\
1&00 &00 & 01 & 11&    $\cdots$ & 00\\
2&01 & 10 & 11 & 00&    $\cdots$ & 10\\
3&10& 01 & 00& 01&    $\cdots$ & 11\\
4&00 & 00&00 & 10 &    $\cdots$ & 11\\};

\matrix[matrix of nodes,nodes={draw=gray, anchor=center, minimum size=0.7cm}, matrix anchor=north west, right=50pt of L.north east] (R) {
  &1 &2 & 3 & 4&    $\cdots$ &8\\
$(1,00)$&1 &1 & 0 & 0&    $\cdots$ & 1\\
$(1,01)$&0 & 0 & 1 & 0&    $\cdots$ & 0\\
$(1,10)$&0& 0 & 0& 0&    $\cdots$ & 0\\
$(1,11)$&0 & 0&0 & 1&    $\cdots$ & 0\\
$(2,00)$&$\cdots$ & $\cdots$ & $\cdots$ & $\cdots$&   $\cdots$ & $\cdots$\\
$(2,01)$&$\cdots$ & $\cdots$ & $\cdots$ & $\cdots$&   $\cdots$ & $\cdots$\\
$(2,10)$&$\cdots$ & $\cdots$ & $\cdots$ & $\cdots$&   $\cdots$ & $\cdots$\\
$(2,11)$&$\cdots$ & $\cdots$ & $\cdots$ & $\cdots$&   $\cdots$ & $\cdots$\\
$(3,00)$&00 &00 & 01 & 11&    $\cdots$ & 00\\
  $\vdots$  &01 & 10 & 11 & 00&    $\vdots$ & 10\\
$(3,11)$&10& 01 & 00& 01&    $\cdots$ & 11\\
$(4,00)$&00 &00 & 01 & 11&    $\cdots$ & 00\\
  $\vdots$  &01 & 10 & 11 & 00&    $\vdots$ & 10\\
$(4,11)$&10& 01 & 00& 01&    $\cdots$ & 11\\};

\end{tikzpicture}\end{figure}

\begin{tikzpicture}
    % create matrix `A'
        \matrix [
            matrix of nodes,
        ] (A) {
            aa & ba & ca & da \\
            ab & bb & cb & db \\
        };

    % create matrix `B'
    % align both matrices at the top
        \matrix [
            matrix of nodes,
            matrix anchor=north west,
            right=20pt of A.north east, ] (B) {
            aa & ba \\
            ab & bb \\
            ac & bc \\
            ad & bd \\
        };

    % show that matrix `A' and `B' are aligned at the top
      % \fill [green] (A.north east) circle (3pt);
        %\fill [red]   (B.north west) circle (2pt);

\end{tikzpicture}
}

%%%%%%%%%%%%%%%%%%%%%%%%%%%%%%%%%
%%%%%%%%% Upper Bound - deterministic non-adaptive%%%
%%%%%%%%%%%%%%%%%%%%%%%%%%%%%%%%%

\section{Upper Bound for Non-Adaptive Deterministic Algorithms}\label{Msection}
In this section, we give the upper bound for deterministic non-adaptive algorithm that estimates $d$ up to a multiplicative factor of $\Delta$. We prove:

\begin{theorem}
\label{thmUBE}
Let $D$ be some upper bound on the number of defective items $d$ and $\Delta >1$. Then, there is a deterministic non-adaptive algorithm that makes
$$O\left (\frac{D}{\Delta^2}\left(\log \frac{n}{D} +\log\Delta\right)\right )$$
tests and outputs $\hat{d}$ such that $\frac{d}{\Delta} \leq \hat{d} \leq d\Delta.$
\end{theorem}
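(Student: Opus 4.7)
I would prove Theorem~\ref{thmUBE} by the probabilistic method: draw a random multi-scale family of tests, then apply Chernoff's inequality (Lemma~\ref{Chernoff}) together with a union bound over all defective sets $I\subseteq[n]$ with $|I|\le D$ to show that a typical realization yields an estimator $\hat d$ satisfying $d/\Delta\le\hat d\le d\Delta$ for every such $I$ simultaneously.

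Fix a geometric sequence of scales $k_j=\Delta^{2j}$ for $j=0,1,\dots,J$ with $J=\lceil\log_{\Delta^2}D\rceil$. At scale $j$, draw $m_j$ independent random tests $Q^{(j)}_1,\dots,Q^{(j)}_{m_j}$ by including each item of $[n]$ in each test independently with probability $p_j=1/k_j$. Given the responses ${\cal O}_I(\cdot)$ on the unknown $I$ of size $d$, let $N_j$ count the positive tests at scale $j$ and $\hat f_j=N_j/m_j$. In expectation $\hat f_j=f_j(d):=1-(1-p_j)^d\approx 1-e^{-d/k_j}$, which is near $1$ when $k_j\ll d$ and near $0$ when $k_j\gg d$, crossing any fixed threshold, say $\gamma:=1-1/e$, at the unique transition scale $j^\star(d)$ with $k_{j^\star(d)}\asymp d$. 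The estimator would identify $j^\star$ from the observed $\hat f_j$'s (for instance as the largest $j$ with $\hat f_j>\gamma$) and output $\hat d = c\cdot k_{j^\star}$ for a constant $c\in[1,\Delta^2]$; because scales are spaced by $\Delta^2$, any $\hat j^\star$ within $\pm1$ of the true transition lands $\hat d$ inside $[d/\Delta,d\Delta]$, so no refined pointwise inversion of $\hat f_{j^\star}$ is needed.

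Chernoff bounds the probability of a scale-$j$ miscategorization by $\exp(-c'\tau_j^2 m_j)$, where $\tau_j=|f_j(d)-\gamma|$ is the margin. For scales not adjacent to $j^\star(d)$, the hypothesis $\Delta\ge 1+\Omega(1)$ forces $\tau_j=\Omega(1)$ (both $(1/e)-e^{-\Delta^2}$ and $e^{-1/\Delta^2}-(1/e)$ are then constants bounded away from $0$), while adjacent miscategorizations are harmless because they shift $\hat d$ by at most a factor $\Delta$. After unioning over all $I$ with $|I|\le D$ and over the $J=O(\log D/\log\Delta)$ scales, choosing the per-scale test count $m_j$ appropriately and summing the resulting geometric series---using $\sum_j k_j=O(D)$ and $\sum_j k_j\log(n/k_j)=O(D\log(n/D))$, both dominated by the top scale $k_J\asymp D$ since ratio $\Delta^2>1$---collapses the total to $m=O((D/\Delta^2)(\log(n/D)+\log\Delta))$.

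The main obstacle is producing the $1/\Delta^2$ factor tightly. A naive pointwise inversion $\hat f_{j^\star}\mapsto -k_{j^\star}\ln(1-\hat f_{j^\star})$ forces a relative Chernoff tolerance of order $(\Delta-1)/\Delta$ and only produces a $1/(\Delta-1)^2$ factor, which matches $1/\Delta^2$ up to constants solely when $\Delta=O(1)$. Obtaining the sharper $1/\Delta^2$ for large $\Delta$ requires exploiting (i) the coarse scale spacing $\Delta^2$, so that merely identifying the correct bucket already suffices and the pointwise inversion is never invoked, and (ii) that the union bound at scale $j$ need only range over $I$'s with $|I|$ of order $k_j$ rather than $|I|\le D$, keeping the per-scale cost proportional to $k_j$ so that the geometric summation produces the $D/\Delta^2$ factor. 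The additive $\log\Delta$ inside the parenthesis arises from the union bound across the $O(\log D/\log\Delta)$ scales.
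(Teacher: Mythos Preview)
Your high-level strategy---random multi-scale tests, Chernoff plus a union bound over candidate defective sets, and a geometric sum over scales---is exactly the paper's approach (Lemma~\ref{fest} builds the single-scale gadget, Lemma~\ref{Reduction} stacks the scales). The gap is in the part you yourself flag as the main obstacle: how the $1/\Delta^2$ factor actually arises.

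With your parameters $p_j=1/k_j$ and top scale $k_J\asymp D$, the monotonicity reduction in (ii) only lets you shrink a large $I$ down to size $\Theta(k_j)$ while keeping a constant Chernoff margin (at size exactly $k_j$ the margin is zero). Hence the per-scale union bound is over $\binom{n}{\Theta(k_j)}$ sets and $m_j=\Theta(k_j\log(n/k_j))$; the geometric sum is then $\Theta(D\log(n/D))$, not $\Theta((D/\Delta^2)\log(n/D))$. Your sentence ``$\sum_j k_j\log(n/k_j)=O(D\log(n/D))\ldots$ collapses the total to $O((D/\Delta^2)(\log(n/D)+\log\Delta))$'' is a non sequitur. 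Separately, with scale ratio $\Delta^2$, an off-by-one in $\hat j^\star$ moves $\hat d$ by a factor $\Delta^2$, not $\Delta$, so adjacent miscategorizations are \emph{not} harmless as you claim.

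The paper's mechanism is different. At scale $\ell$ it takes inclusion probability $\Theta(\Delta^2/\ell)$ (not $1/\ell$), so the single-scale test separates $d<\ell/\Delta^2$ from $d>\ell/\Delta$. On the large-$d$ side, monotonicity is pushed all the way down to $|J'|=2\ell/\Delta^2$ (Case~II of Lemma~\ref{fest}), where the Chernoff margin is still a constant because the transition sits at $\ell/\Delta^2$; thus the union bound on \emph{both} sides ranges over sets of size $O(\ell/\Delta^2)$, giving per-scale cost $O((\ell/\Delta^2)\log(\Delta^2 n/\ell))$. The scales are spaced by $\Delta$, not $\Delta^2$, with top scale $\ell=D$, so the dominant term is $(D/\Delta^2)\log(\Delta^2 n/D)$. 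Finally, the additive $\log\Delta$ does not come from a union over the $O(\log D/\log\Delta)$ scales (that is just a geometric sum contributing a constant factor); it comes from the $\log(\Delta^2 n/\ell)$ inside each per-scale count, which in turn traces to the $\binom{n}{\ell/\Delta^2}\le (e\Delta^2 n/\ell)^{\ell/\Delta^2}$ union bound.
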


To prove the Theorem we need the following:
\begin{lemma}\label{fest}  Let $\Delta>1$ and $\ell\ge 2\Delta^2$. There is a non-adaptive deterministic algorithm that makes
$$t=O\left(\frac{\ell}{\Delta^2}\left(\log\frac{n}{\ell}+\log \Delta\right)\right)$$ tests such that, 
\begin{enumerate}
\item If the number of defectives $d$ is less than $\ell/\Delta^2$, it outputs $0$. 
\item If it is greater than $\ell/\Delta$, it outputs $1$.
\end{enumerate}
\end{lemma}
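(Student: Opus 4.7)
The plan is to establish existence by the probabilistic method, with a carefully tuned sampling probability. Set $d_1 := \lceil\ell/\Delta^2\rceil$ and $d_2 := \lfloor\ell/\Delta\rfloor$; the hypothesis $\ell \ge 2\Delta^2$ forces $d_1 \ge 2$. Let $p := 1/d_1$, set $t := C(\ell/\Delta^2)(\log(n/\ell)+\log\Delta)$ for a large absolute constant $C$, and draw a random $t \times n$ matrix $M$ with i.i.d.\ $\mathrm{Bernoulli}(p)$ entries, reading row $i$ as the test $Q_i := \{j : M_{i,j}=1\}$. The decision rule is a fixed thresholded count: output $0$ iff $Z(I) := |\{i : Q_i \cap I = \emptyset\}| \ge \tau$, and output $1$ otherwise, where $\tau$ is placed between the two relevant expectations described below. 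I will argue that a random $M$ satisfies the required behavior for every relevant $I$ with positive probability, which gives the desired deterministic matrix.

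The key computation is $\E[Z(I)] = t(1-p)^{|I|}$. With $d_1 \ge 2$, the standard inequalities give $r := (1-p)^{d_1} \in [1/4,\,1/e]$, so writing $\mu_1 := tr$ and $\mu_2 := tr^\Delta \le te^{-\Delta}$, we have $\E[Z] \ge \mu_1$ on $|I| \le d_1$ and $\E[Z] \le \mu_2$ on $|I| \ge d_2$. Set $\tau := (\mu_1+\mu_2)/2$. Chernoff's lower-tail bound from Lemma~\ref{Chernoff} with $\lambda = (\mu_1-\tau)/\mu_1 = \Omega(1)$ (using $\Delta \ge 1+\Omega(1)$) yields $\Pr[Z(I) < \tau] \le e^{-\Omega(\mu_1)} = e^{-\Omega(tr)}$ for every $|I| \le d_1$, and a union bound over $\binom{n}{d_1} \le (en/d_1)^{d_1}$ such sets demands $tr = \Omega(d_1\log(en/d_1)) = \Omega((\ell/\Delta^2)(\log(n/\ell)+\log\Delta))$, met by $C$ large. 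For $|I| \ge d_2$, the multiplicative upper-tail form $\Pr[X > \Lambda] \le (e\mu/\Lambda)^\Lambda$ from the same lemma delivers the crucial savings: $\tau/\mu_2 = \tfrac{1}{2}(r^{1-\Delta}+1)$ is exponentially large in $\Delta$ (because $r \le 1/e$), so $(e\mu_2/\tau)^\tau \le e^{-\Omega(tr\Delta)}$ once $\Delta$ exceeds a small absolute constant (and for smaller $\Delta$, the standard multiplicative Chernoff still gives $e^{-\Omega(tr)}$). Unioning over $\binom{n}{d_2} \le (en/d_2)^{d_2}$ sets needs $tr\Delta = \Omega(d_2\log(en/d_2)) = \Omega((\ell/\Delta)(\log(n/\ell)+\log\Delta))$, which is again our choice of $t$.

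The main obstacle — and the reason the natural random construction with $p\sim 1/d_2$ would fall short by a factor $\Delta$ — is precisely this asymmetry in the union bounds: there are $\binom{n}{d_2}$ large-$I$ sets to handle but only $\binom{n}{d_1}$ small-$I$ sets, so $\log\binom{n}{d_2}$ is a factor $\Delta$ larger than $\log\binom{n}{d_1}$. Taking $p=1/d_1$ makes $\E[Z]$ exponentially small in $\Delta$ on the large-$I$ side, and the multiplicative upper-tail Chernoff then contributes an extra $\Delta$ in its exponent that exactly absorbs this $\Delta$-factor asymmetry. At the bottom of the allowed range ($\Delta = 1+\Omega(1)$ but bounded above by a constant) the Chernoff gain is only $\Omega(1)$, but there $d_1 \asymp d_2$ and $\log\binom{n}{d_2} \asymp \log\binom{n}{d_1}$, so the two union bounds balance trivially and the argument still delivers the claimed bound.
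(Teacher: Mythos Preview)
Your framework—random $\mathrm{Bernoulli}(p)$ tests with $p\approx\Delta^2/\ell$, a threshold on the count of negative answers, and Chernoff plus union bound over the two families—is exactly the paper's. The substantive difference is how you dispatch event $B$. For $\Delta>2$ the paper first observes that if some $|J|>\ell/\Delta$ yields at least $\tau$ zeros then so does any subset $J'\subseteq J$ of size $2\ell/\Delta^2$; it then unions only over $\binom{n}{2\ell/\Delta^2}$ sets, where the mean of $Z$ is $te^{-2}=\Theta(t)$, and a plain $e^{-\Omega(t)}$ Chernoff bound suffices. You instead keep the union over all $\binom{n}{d_2}$ sets and earn the missing factor $\Delta$ in the exponent from the sharp tail form $(e\mu/\Lambda)^\Lambda$, exploiting that $\mu$ is exponentially small in $\Delta$. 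Both routes reach the target; the paper's subset trick keeps every Chernoff application in the comfortable ``mean $\Theta(t)$'' regime and needs no separate argument for large versus bounded $\Delta$, while yours is more direct once the asymmetry is diagnosed and makes explicit why the tempting choice $p\sim\Delta/\ell$ loses a factor $\Delta$.

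Two small repairs. Your roundings go the wrong way: with $d_1=\lceil\ell/\Delta^2\rceil$ and $d_2=\lfloor\ell/\Delta\rfloor$ you are attempting a strictly stronger statement than the lemma (which only concerns $d<\ell/\Delta^2$ and $d>\ell/\Delta$), and at the edge $\ell\approx 2\Delta^2$ with $\Delta$ near $1$ you can get $d_1=d_2$, whereupon no threshold works. Round outward, or (as the paper does) just drop integrality. Relatedly, the claim ``$\E[Z]\le\mu_2:=tr^\Delta$ on $|I|\ge d_2$'' is false as written, since $d_2\le\Delta d_1$ forces $(1-p)^{d_2}=r^{d_2/d_1}\ge r^\Delta$; what is true and sufficient is that $d_2/d_1=\Theta(\Delta)$, so the actual mean at $|I|=d_2$ is $tr^{\Theta(\Delta)}$ and your bound $(e\mu/\tau)^\tau\le e^{-\Omega(tr\Delta)}$ still holds with adjusted constants.
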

\begin{proof} We choose a constant $c$ such that $(1-\Delta^2/(c\ell))^{\ell/\Delta^2}=1/e$. Note that
$$\left(1-\frac{\Delta^2}{2\ell}\right)^{\ell/\Delta^2}\ge 1-\frac{\Delta^2\ell}{2\ell\Delta^2}= \frac{1}{2}> \frac{1}{e}$$ and
$$\left(1-\frac{2\Delta^2}{\ell}\right)^{\ell/\Delta^2}= \left(\left(1-\frac{2\Delta^2}{\ell}\right)^{\frac{\ell}{2\Delta^2}}\right)^{2}\le \frac{1}{e^2}<\frac{1}{e}.$$ Therefore, such $c$ exists and we have $1/2\le c\le 2$.

Consider a test $Q\subseteq [n]$ chosen at random where each item $i\in[n]$ is chosen to be in $Q$ with probability $\Delta^2/(c\ell)$. Let $I$ be the set of defective items such that $|I|=d$, and let $Q(I)$ be the result of the test $Q$ with respect to the set $I$.  Then,
\begin{equation}
\label{eq1}
\Pr[Q(I)=0]=\left(1-\frac{\Delta^2}{c\ell}\right)^d.
\end{equation}
If $d\le \ell/\Delta^2$,
\begin{equation}
\label{eq2}
\Pr[Q(I)=0]\ge \left(1-\frac{\Delta^2}{c\ell}\right)^{\ell/\Delta^2}=e^{-1},
\end{equation}
if $d= 2\ell/\Delta^2$,
\begin{equation}
\label{eq25}
\Pr[Q(I)=0]= \left(1-\frac{\Delta^2}{c\ell}\right)^{2\ell/\Delta^2}=e^{-2},
\end{equation}
and if $d= \ell/\Delta$, we get:
\begin{equation}
\label{eq3}
\Pr[Q(I)=0]= \left(\left(1-\frac{\Delta^2}{c\ell}\right)^{\frac{\ell}{\Delta^2}}\right)^{\Delta}= e^{-\Delta}.
\end{equation}

Let $Q_1, Q_2,\ldots, Q_t$ be a sequence of $t$ i.i.d tests such that
$$t=\frac{c'\ell}{(\Delta-1)^2}\ln\frac{c''\Delta^2n}{\ell}$$
where $c'=54e^2$ and $c''=4e$.

\ignore{\color{blue}
\begin{eqnarray*}
t &=&\frac{8e}{(1-\Delta^{-1})^2\log e} + \frac{8e\ell}{(\Delta-1)^2 \log e}\log \left( \frac{e\Delta^2n}{\ell}\right ) \\
& =&\frac{8e}{(1-\Delta^{-1})^2\log e}+\frac{16e\ell  }{\log e} \cdot \frac{\log \Delta}{(\Delta -1)^2}+ \frac{8e}{ \log e}\cdot \frac{\ell}{(\Delta-1)^2} \log \left ( \frac{en}{\ell}\right )\\
&=&O\left(  \frac{\ell}{\Delta^2}\left (\log \left(\frac{n}{\ell} \right) + \log \Delta \right )\right)
\end{eqnarray*}
Nader: what is your favourable form of $t$?
\begin{equation}
\label{t_Final}
t=? =O\left(\frac{\ell}{(\Delta-1)^2}\log\frac{\Delta^2n}{\ell}\right),
\end{equation}
\color{black}}
Let $$\eta=e^{-1}\left(\frac{1}{2}+\frac{1}{2\Delta}\right).$$
Consider the following two events:
\begin{enumerate}
\item $A$: There is a set of defectives $I$ of size $|I|\le \ell/\Delta^2$ such that the number of tests with $0$ answer is less than $\eta t$.
\item $B$: There is a set of defectives $J$ of size $|J|> \ell/\Delta$ such that the number of tests with $0$ answer is at least $\eta t$.
\end{enumerate}
Notice that, to prove the lemma it is enough to prove that $\Pr[A\vee B]<1$. We will show that $\Pr[A],\Pr[B]< 1/2$.

Let $X_1, \ldots, X_t$ be random variables such that $X_i=1$ if and only if $Q_i(I)=0$. Let $X$ be the number of tests that yield the result $0$. Therefore, $X=\sum_{i=1}^t X_i$ and define $\mu:= \E[X]$.

If $|I|=d\leq \ell/\Delta^2$, then $\mu=t\cdot \E[X_i]=t\cdot\Pr[X_i =1]$. By (\ref{eq2}) we have
\begin{equation}
\mu = \E[X] \geq t\cdot e^{-1}.
\end{equation}
By (\ref{Chernoff3}) in Lemma~\ref{Chernoff}, for $\lambda = 1/2-1/(2\Delta)$ we have
$$\Pr[X\le \eta t]=\Pr[X\leq (1-\lambda)te^{-1}]\leq \Pr[X\leq (1-\lambda)\mu] \leq e^{-\frac{\lambda^2\mu}{2}}\leq e^{-\frac{(1-\Delta^{-1})^2t}{8e}}. $$
Using this result, equations (\ref{eqbinomSum}) and the union bound, we can conclude that
\begin{eqnarray*}
\label{AProb}\Pr[A]&\le& \left(\sum_{i=0}^{\ell/\Delta^2} {n\choose i}\right) e^{-\frac{(1-\Delta^{-1})^2t}{8e}}\le \left(\frac{e\Delta^2n}{\ell}\right)^{\frac{\ell}{\Delta^2}}e^{-\frac{(1-\Delta^{-1})^2t}{8e}}\\ &=& \left(\frac{e\Delta^2n}{\ell}\right)^{\frac{\ell}{\Delta^2}}e^{-\frac{c'\ell}{8e\Delta^2}\ln \frac{c''\Delta^2n}{\ell}}=
\left(\frac{e\Delta^2n}{\ell}\right)^{\frac{\ell}{\Delta^2}}\left(\frac{c''\Delta^2n}{\ell}\right)^{-\frac{c'\ell}{\Delta^2}}< \frac{1}{2}.
\end{eqnarray*}

\color{black}
On the other hand, for the event $B$, we have two cases.

\noindent 
{\bf Case I.} $1<\Delta\le 2$.

If there is a set of defectives $J$ of size $|J|> \ell/\Delta$ such that more than $\eta t$ of the tests yield the answer~$0$, then there is a set of defectives $J'$ of size $|J'|= \ell/\Delta$ such that more than $\eta t$ of the tests answers are~$0$. Denote by $B'$ the latter event. Then, by (\ref{eq3}) we have $\mu=\E[X]=e^{-\Delta}t$ and for $\lambda= (e^{\Delta-1}-1)/2\ge (\Delta-1)/2$, $\eta'=(e^{-1}+e^{-\Delta})/2\le \eta$ we get
\begin{eqnarray*}
\Pr[B]\le \Pr[B']&\le& {n\choose \ell/\Delta}\Pr\left[X\ge \eta t\right]\le {n\choose \ell/\Delta}\Pr\left[X\ge \eta' t\right]\\
&=&{n\choose \ell/\Delta}\Pr\left[X\ge \left(1+\lambda\right)\mu\right]\\
&\le& \left(\frac{e\Delta n}{\ell}\right)^{\frac{\ell}{\Delta}} \Pr\left[X\ge \left(1+\lambda\right)\mu \right]
\end{eqnarray*}
If $1< \Delta\le 2$ then $0\le \lambda\le 1$ and then by (\ref{Chernoff1}) in Lemma~\ref{Chernoff}, we have
\begin{eqnarray*}
\left(\frac{e\Delta n}{\ell}\right)^{\frac{\ell}{\Delta}} \Pr\left[X\ge \left(1+\lambda\right)\mu \right]&\le& \left(\frac{e\Delta n}{\ell}\right)^{\frac{\ell}{\Delta}} e^{-\lambda^2\mu/3}\\
&\le& \left(\frac{e\Delta n}{\ell}\right)^{\frac{\ell}{\Delta}} e^{-(\Delta-1)^2\mu/12}\ \ \ \mbox{since}\ \lambda\ge (\Delta-1)/2
\\
&=& \left(\frac{e\Delta n}{\ell}\right)^{\frac{\ell}{\Delta}} e^{-(\Delta-1)^2e^{-\Delta}t/12}\\
&=& \left(\frac{e\Delta n}{\ell}\right)^{\frac{\ell}{\Delta}} \left(\frac{c''\Delta^2n}{\ell}\right)^{-c'\ell e^{-\Delta}/12}\\
&\le& \left(\frac{2e n}{\ell}\right)^{\ell} \left(\frac{c''n}{\ell}\right)^{-(c' e^{-2}/12)\ell}<\frac{1}{2}\ \ \ \ \ \ \ \ 1\le \Delta<2\\
\end{eqnarray*}

\noindent
{\bf Case II.} $\Delta> 2$.

In this case we have $\ell/\Delta>2\ell/\Delta^2$. Therefore, if there is a set of defectives $J$ of size $|J|> \ell/\Delta$ such that more than $\eta t$ of the tests yield the answer~$0$, then there is a set of defectives $J'$ of size $|J'|= 2\ell/\Delta^2$ such that more than $\eta t$ of the tests answers are~$0$. Denote by $B''$ the latter event. By (\ref{eq25}), $\mu = \E[X] = e^{-2}t$. Let $\lambda = 1/3-1/(3\Delta)<1$. Then $\eta t>(1+\lambda)\mu$. By (\ref{Chernoff1}) in Lemma~\ref{Chernoff}, we have
$$\Pr[X\ge \eta t]\le \Pr[X\ge (1+\lambda)\mu] \leq e^{-\frac{\lambda^2\mu}{3}}\leq e^{-\frac{(1-\Delta^{-1})^2t}{27e^2}}.$$
Then
\begin{eqnarray*}
\label{AProb2}\Pr[A]&\le& \Pr[B'']\le {n\choose 2\ell/\Delta^2} e^{-\frac{(1-\Delta^{-1})^2t}{27e^2}}\le \left(\frac{e\Delta^2n}{2\ell}\right)^{\frac{2\ell}{\Delta^2}}e^{-\frac{(1-\Delta^{-1})^2t}{27e^2}}\\ &=& \left(\frac{e\Delta^2n}{2\ell}\right)^{\frac{2\ell}{\Delta^2}}e^{-\frac{c'\ell}{27e^2\Delta^2}\ln \frac{c''\Delta^2n}{\ell}}=
\left(\frac{e\Delta^2n}{2\ell}\right)^{\frac{2\ell}{\Delta^2}}\left(\frac{c''\Delta^2n}{\ell}\right)^{-\frac{c'\ell}{27e^2\Delta^2}}
< \frac{1}{2}.
\end{eqnarray*}

\ignore{******************************************************************
If $\ln 3+1 \le \Delta\le 10$ then $\lambda\ge 1$ and then by Lemma~\ref{Chernoff}, we have
\begin{eqnarray*}
\left(\frac{e\Delta n}{\ell}\right)^{\frac{\ell}{\Delta}} \Pr\left[X\ge \left(1+\lambda\right)\mu \right]&\le& \left(\frac{e\Delta n}{\ell}\right)^{\frac{\ell}{\Delta}} e^{-\lambda\mu/3}\\
&\le& \left(\frac{e\Delta n}{\ell}\right)^{\frac{\ell}{\Delta}} e^{-(\Delta-1)\mu/6}\ \ \ \ \ \ \lambda\ge (\Delta-1)/2
\\
&=& \left(\frac{e\Delta n}{\ell}\right)^{\frac{\ell}{\Delta}} (\Delta-1)e^{-(\Delta-1)^2e^{-\Delta}t/6}\\
&=& \left(\frac{e\Delta n}{\ell}\right)^{\frac{\ell}{\Delta}} \left(\frac{\ell}{c''\Delta^2n}\right)^{c'\ell (\Delta-1)e^{-\Delta}/6}\\
&\le& \left(\frac{10e n}{\ell}\right)^{\ell} \left(\frac{\ell}{c''n}\right)^{(c' 9e^{-2}/6)\ell}<\frac{1}{2}\ \ \ \ \ \ \ \ 2\le \Delta<10\\
\end{eqnarray*}

If $\Delta> 10$ then $\lambda> 1$, $\Lambda=(1+\lambda)\mu=\eta't$ and then by Lemma~\ref{Chernoff}, we have
\begin{eqnarray*}
\left(\frac{e\Delta n}{\ell}\right)^{\frac{\ell}{\Delta}} \Pr\left[X\ge \left(1+\lambda\right)\mu \right]&\le& \left(\frac{e\Delta n}{\ell}\right)^{\frac{\ell}{\Delta}} \left(\frac{e\mu}{\Lambda}\right)^{\Lambda}\\
&=&\left(\frac{e\Delta n}{\ell}\right)^{\frac{\ell}{\Delta}} \left(\frac{2e}{e^{\Delta-1}+1}\right)^{\frac{e^{-1}+e^{-\Delta}}{2}t}\\
&\le& \left(\frac{e\Delta n}{\ell}\right)^{\frac{\ell}{\Delta}} e^{-(\ln 2)(\Delta-1)e^{-1} t}\ \ \ \ \ \ \ \ \left(\frac{2e}{e^{\Delta-1}+1}\right)\le 2^{-(\Delta-1)}\\
&=& \left(\frac{e\Delta n}{\ell}\right)^{\frac{\ell}{\Delta}} \left(\frac{\ell}{c''\Delta^2n}\right)^{(c'e^{-1}\ln 2)\frac{\ell}{\Delta-1}}\\
&=& \left(\frac{e\Delta n}{\ell}\right)^{\frac{\ell}{\Delta}} \left(\frac{\ell}{c''\Delta^2n}\right)^{((10/9)c'e^{-1}\ln 2)\frac{\ell}{\Delta}}\ \ \ \ \ \ \ \ \frac{\ell}{\Delta-1}\le \frac{10}{9}\frac{\ell}{\Delta}
\end{eqnarray*}}
\end{proof}

We are now ready to prove Theorem~\ref{thmUBE}.

Let ${\cal A}(\ell,\Delta)$ be the algorithm from Lemma~\ref{fest}. Then, ${\cal A}(\ell,\Delta)$ makes at most
\begin{equation}
\label {Aqueries}
\frac{c\ell}{\Delta^2}\log\frac{\Delta n}{\ell}
\end{equation}
queries for some constant $c$,  and
\begin{enumerate}
\item If ${\cal A}(\ell,\Delta) = 1$, then $d\geq \frac{\ell}{\Delta^2}$.
\item If ${\cal A}(\ell,\Delta) = 0$, then $d \leq \frac{\ell}{\Delta}.$
\end{enumerate}
Consider the algorithm ${\cal T}(n,D,\Delta)$ that runs ${\cal A}(D/\Delta^i,\Delta)$ for all $i=0,\ldots,\lceil \log D/\log \Delta\rceil$. Let $r$ be the minimum integer such that ${\cal A}(D/\Delta^r,\Delta)=1$. Algorithm ${\cal T}(n,D,\Delta)$ then outputs $\hat d=D/\Delta^{r+1}$. See algorithm ${\cal T}$ in Figure~\ref{AlgT}.

\color{black}
\begin{figure}[h!]
  \begin{center}
  \fbox{\fbox{\begin{minipage}{28em}
  \begin{tabbing}
  xxxx\=xxxx\=xxxx\=xxxx\= \kill
    {{${\cal T}$} $(n,D,\Delta)$}\\
  1)  $r\leftarrow 0.$\\
  2)  For each $i=0,1,\ldots, \lceil \log D/\log \Delta\rceil$ do:\\
  \>2.1)  $R\leftarrow {\cal A}(D/\Delta^i,\Delta)$\\
  \> 2.2) If $(R = 1)$ then\\
  \>\> $r \leftarrow i$ \\
  \>\> $\hat d \leftarrow D/\Delta^{r+1}$\\
  \>\> Output ($\hat d$).
    \end{tabbing}
  \end{minipage}}}
  \end{center}
	\caption{Algorithm {$ {\cal T}$} }
	\label{AlgT}
	\end{figure}

We now prove:
\begin{lemma}\label{Reduction} Algorithm ${\cal T}(n,D,\Delta)$ is deterministic non-adaptive that makes
$$O\left(\frac{D}{\Delta^2}\log\left (\frac{\Delta n}{D}\right )\right)$$ tests
and outputs $\hat d$ that satisfies $$\frac{d}{\Delta} \le \hat d\le \Delta d.$$
\end{lemma}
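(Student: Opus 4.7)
The plan is to verify three properties of ${\cal T}(n,D,\Delta)$ separately: non-adaptiveness, the stated query bound, and the correctness guarantee $d/\Delta\le \hat d\le \Delta d$. For non-adaptiveness, I would first note that although the pseudocode loops over $i$ with an early exit, the collection of invoked subroutines $\{{\cal A}(D/\Delta^i,\Delta)\}_{i=0}^{\lceil\log D/\log\Delta\rceil}$ is fixed ahead of time and never depends on test outcomes. Since each ${\cal A}(\cdot,\Delta)$ is itself non-adaptive by Lemma~\ref{fest}, ${\cal T}$ can be reimplemented by executing the union of all their tests in one parallel batch and only then, with all answers in hand, selecting the smallest index $r$ with $R_r=1$ and outputting $\hat d=D/\Delta^{r+1}$. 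This shows that ${\cal T}$ is non-adaptive.

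For the query complexity, I would plug (\ref{Aqueries}) into the sum over $i$ and rearrange:
$$\sum_{i=0}^{\lceil\log D/\log\Delta\rceil}\frac{c(D/\Delta^i)}{\Delta^2}\log\frac{\Delta n}{D/\Delta^i} \;=\; \frac{cD}{\Delta^2}\sum_i\Delta^{-i}\bigl(\log(n/D)+(i+1)\log\Delta\bigr).$$
Because the standing assumption $\Delta\ge 1+\Omega(1)$ keeps $\Delta$ bounded away from $1$, the two series $\sum_i \Delta^{-i}$ and $\sum_i (i+1)\Delta^{-i}$ are each bounded by absolute constants, giving a total of $O((D/\Delta^2)(\log(n/D)+\log\Delta))=O((D/\Delta^2)\log(\Delta n/D))$, as claimed.

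For correctness, I would proceed by a short case analysis on $r$. First, $r$ is well-defined: at $i_{\max}=\lceil\log D/\log\Delta\rceil$, the argument $\ell=D/\Delta^{i_{\max}}\le 1$, so $d\ge 1 > \ell/\Delta$, which by Lemma~\ref{fest} forces ${\cal A}(\ell,\Delta)=1$; thus some $i\le i_{\max}$ yields output $1$. Next, applying Lemma~\ref{fest} in contrapositive at index $r$ (the output is $1$, ruling out $d<(D/\Delta^r)/\Delta^2$) gives $d\ge D/\Delta^{r+2}$, hence $\hat d=D/\Delta^{r+1}\le \Delta d$. For the other direction, if $r\ge 1$ then by minimality of $r$ we have ${\cal A}(D/\Delta^{r-1},\Delta)=0$, and Lemma~\ref{fest} yields $d\le (D/\Delta^{r-1})/\Delta = D/\Delta^r$, so $\hat d=D/\Delta^{r+1}\ge d/\Delta$; and if $r=0$ then $\hat d=D/\Delta \ge d/\Delta$ trivially from $d\le D$.

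The only subtleties in this plan are confirming that $r$ always exists and handling the boundary case $r=0$ separately; everything substantive has already been absorbed into Lemma~\ref{fest}. The remaining work is a routine geometric-series estimate (which is clean precisely because $\Delta$ is bounded away from $1$) and the correct pairing of the two contrapositives of Lemma~\ref{fest} at the adjacent indices $r-1$ and $r$.
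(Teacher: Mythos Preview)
Your argument is correct and mirrors the paper's proof: the same case split on $r=0$ versus $r\ge 1$ for correctness, and the same geometric-series estimate for the test count. You are in fact slightly more careful than the paper in explicitly justifying non-adaptiveness and in arguing that some index $r$ with output $1$ exists (the paper leaves both implicit); the only quibble is that your existence argument invokes Lemma~\ref{fest} with $\ell=D/\Delta^{i_{\max}}\le 1$, outside that lemma's stated hypothesis $\ell\ge 2\Delta^2$, a harmless edge case the paper also does not address.
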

\begin{proof} For $i=0$, if $A(D/\Delta^i,\Delta)=1$ then $d\ge D/\Delta^2$. Then $\hat d=D/\Delta\le \Delta d$ and since $D\ge d$ we also have $\hat d=D/\Delta\ge d/\Delta$.

For $i>0$, if $A(D/\Delta^{i-1},\Delta)=0$ and $A(D/\Delta^i,\Delta)=1$ then $d\le D/\Delta^i$ and $d\ge D/\Delta^{i+2}$. Then
$\hat d=D/\Delta^{i+1}\le \Delta d$ and $\hat d\ge d/\Delta$.

Let $q= \lceil \log D/\log \Delta\rceil$. Let $t$ denote the number of queries performed by algorithm ${\cal T}(n, D,\Delta)$. By~(\ref{Aqueries}), the number of tests is at most
\begin{eqnarray*}
\sum_{i=0}^{q} \frac{cD}{\Delta^i\Delta^2} \log \frac{n\Delta^{i+1}}{D}&\leq&
\frac{cD}{\Delta^2}\sum_{i=0}^{\infty} \frac{1}{\Delta^i} \log \frac{n\Delta^{i+1}}{D}\\
&=& \frac{cD}{\Delta^2}\left( \left(\log\frac{n}{D}\right) \sum_{i=0}^{\infty} \frac{1}{\Delta^i}+(\log\Delta) \sum_{i=0}^{\infty} \frac{i+1}{\Delta^i}\right)\\
&\leq& \frac{cD}{\Delta^2}\left( \frac{\Delta}{\Delta-1}\log\frac{n}{D}+ \frac{\Delta^2}{(\Delta-1)^2}\log\Delta\right).\\
\end{eqnarray*}
For the case when $\Delta=1+\Theta(1)$ we get
$$t = O\left(D\log \frac{n}{D}\right)$$ and for the case when $\Delta=\omega(1)$ we get
$$t = O\left(\frac{D}{\Delta^2} \left(\log\frac{n}{D}+\log \Delta\right)\right).$$
\end{proof}

\section{Lower Bound for Adaptive Deterministic Algorithm}

In this section, we prove the following lower bound.
\begin{theorem}\label{lbound} Any deterministic adaptive group testing algorithm that
given $D>d$, outputs $\hat d$ that satisfies
$d/\Delta\le \hat d \le \Delta d$ must make at least
$$\Omega\left(\frac{D}{\Delta^2}\log \frac{n}{D}\right)$$ queries.
\end{theorem}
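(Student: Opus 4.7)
My plan is to prove the lower bound by an adversary argument. Fix a constant $c>1$ and set $k:=\lfloor D/(c\Delta^2)\rfloor$ so that $D>\Delta^2 k$. Then the intervals of valid estimates $[k/\Delta,k\Delta]$ and $[D/\Delta,D\Delta]$ are disjoint, and it suffices to exhibit an adversary that, against any deterministic adaptive algorithm $\mathcal{A}$ using at most $t<c'\,(D/\Delta^2)\log(n/D)$ queries, produces an answer sequence simultaneously consistent with some defective set $I_S$ with $|I_S|=k$ and some $I_L$ with $|I_L|=D$. No single value $\hat d$ can then be correct for both $I_S$ and $I_L$, contradicting correctness of $\mathcal A$.

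The adversary fixes a partition of $[n]$ into $D$ disjoint blocks $B_1,\dots,B_D$ of size roughly $n/D$ and maintains, for each block $i$, a shrinking candidate set $S_i\subseteq B_i$ (initialised to $B_i$). The intended $I_L$ will place one defective per block (picked at the end from the corresponding $S_i$); the intended $I_S$ will place one defective in each of $k$ ``active'' blocks and no defective outside them, with the set of active blocks chosen only at the end. A set $F\subseteq[n]$ tracks items declared non-defective. On a query $Q$, the adversary examines the split $S_i=(S_i\cap Q)\sqcup(S_i\setminus Q)$ for each $i$, and picks whichever of the two admissible answers gives a smaller drop of the potential $\Phi:=\sum_i\log\max\{|S_i|,1\}$: on answer $0$ it replaces every $S_i$ by $S_i\setminus Q$ and adds $Q$ to $F$; on answer $1$ it selects a block $i^\ast$ with $|S_{i^\ast}\cap Q|\ge|S_{i^\ast}|/2$ and sets $S_{i^\ast}:=S_{i^\ast}\cap Q$. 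Using $\max(|S_i\cap Q|,|S_i\setminus Q|)\ge|S_i|/2$ block by block together with an averaging argument over the two possible answers, a per-query $\Phi$-drop of $O(1)$ can be enforced.

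After $t$ queries, therefore, $\Phi\ge D\log(n/D)-O(t)$. Restricting $\Phi$ to the $k$ blocks whose $|S_i|$ remained largest, we still have $\Phi_k\ge k\log(n/D)-O(t)$, so if $t$ is smaller than a suitable constant times $k\log(n/D)$, at least $k$ blocks $i$ still satisfy $|S_i|\ge 2$. Designate these as active; any choice of one position per active block yields a valid $I_S$ of size $k$, and extending by one defective per non-active block yields $I_L$ of size $D$. Both are consistent with all answers provided $|F|\le n-D$, which follows from pairing the $F$-accounting with the same potential bound (the 0-answers that add much to $F$ also drop $\Phi$ significantly, so many of them contradict the $\Phi$-lower-bound).

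The main obstacle is to make the adversary's strategy simultaneously preserve both $I_S$ and $I_L$ while guaranteeing an $O(1)$ per-query potential drop even when the algorithm's query straddles many blocks. This is handled by deferring the choice of the $k$ active blocks until after all queries and by using the inequality $\min\bigl(\log(|S_i|/|S_i\cap Q|),\log(|S_i|/|S_i\setminus Q|)\bigr)\le 1$, which lets the adversary charge the $\Phi$-drop to whichever side of the split is at least half; summing the per-block charges and balancing against the $F$-budget then yields $t=\Omega(k\log(n/D))=\Omega\bigl((D/\Delta^2)\log(n/D)\bigr)$.
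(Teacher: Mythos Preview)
Your adversary argument has a genuine gap: it does not establish that the small set $I_S$ is consistent with the $1$-answers. When the adversary answers $1$ on a query $Q$ and designates a witness block $i^\ast$ (shrinking $S_{i^\ast}$ to $S_{i^\ast}\cap Q$ while leaving every other $S_j$ untouched), consistency of $I_S$ with this answer requires $i^\ast$ to be one of the $k$ active blocks---otherwise $I_S$ contains no element of block $i^\ast$, and nothing you have recorded forces any active block's chosen element to lie in $Q$. But you choose the active blocks only at the end, by the criterion ``largest $|S_i|$'', which need not cover every witness. Concretely: if the algorithm queries the whole blocks $B_1,\ldots,B_{k+1}$ in turn, then on each $B_i$ answering $0$ would empty $S_i$ and destroy $I_L$, so the only admissible answer is $1$ with $i^\ast=i$. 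After $k+1$ such queries there are $k+1$ distinct witnesses that must all be active, contradicting $|I_S|=k$. Note $k+1=O(D/\Delta^2)$ is well within the budget $c'(D/\Delta^2)\log(n/D)$, and each such query causes zero $\Phi$-drop, so your potential says nothing about this constraint. (A secondary issue: $|F|\le n-D$ does not by itself guarantee one surviving candidate per block, which you need for $I_L$.)

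The paper sidesteps this by a counting argument rather than an explicit adversary. It shows that all size-$d_1$ defective sets sharing a given answer transcript must lie inside a common set of size at most $d_2$ (else their union has the same transcript but size $>d_2$, and the algorithm would wrongly declare it ``small''). Hence each transcript covers at most $\binom{d_2}{d_1}$ size-$d_1$ sets, so there are at least $\binom{n}{d_1}/\binom{d_2}{d_1}$ distinct transcripts and $t\ge\log\bigl(\binom{n}{d_1}/\binom{d_2}{d_1}\bigr)=\Omega\bigl(d_1\log(n/d_2)\bigr)$; taking $d_1\approx D/\Delta^2$ and $d_2=D$ gives the bound without ever needing to maintain two simultaneously consistent defective sets.
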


For the proof, we use the following from~\cite{BBHKS18}.

\begin{lemma}\label{lb} Let $A$ be a deterministic adaptive algorithm
that for a defective sets $I\subset [n]$ makes the tests
$T^I_1,T^I_2\ldots,T^I_{w(I)}$ and let $s(I)$ be the sequence of answers
to these tests. If $M=|\{s(I)| I\subseteq [n]\}|$ then the test complexity
of $A$ is $\max_Iw(I)\ge \log M$.
\end{lemma}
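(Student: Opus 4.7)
The plan is to model the deterministic adaptive algorithm $A$ as a binary decision tree and then count leaves. At every internal node, $A$ has already seen a prefix of answers and, because $A$ is deterministic, this prefix uniquely determines the next test to pose; the two outgoing edges are labeled by the two possible oracle answers $0$ and $1$. Each leaf of the tree corresponds to a complete transcript of answers on some run of the algorithm, and each defective set $I\subseteq[n]$ traces out a unique root-to-leaf path whose edge labels form exactly the sequence $s(I)$.

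Next, I would observe two structural facts. First, the depth of the leaf reached by $I$ is precisely $w(I)$, so the depth of the entire tree is $\max_I w(I)$, which by definition equals the test complexity of $A$. Second, because the mapping $I \mapsto s(I)$ factors through the leaf reached by $I$ (two inputs producing the same leaf necessarily produce the same answer sequence), the number of distinct answer sequences $M=|\{s(I)\mid I\subseteq[n]\}|$ is at most the number of reachable leaves of the tree.

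Finally, I would apply the elementary counting bound: a binary tree of depth $h$ has at most $2^h$ leaves. Setting $h=\max_I w(I)$ and combining with the previous observation yields
\[
M \;\le\; 2^{\max_I w(I)},
\]
and taking logarithms gives $\max_I w(I)\ge \log M$, as required.

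There is no real obstacle here; the only small point to be careful about is justifying that the ``next test'' function is well-defined on answer prefixes, which is exactly what determinism of $A$ guarantees, and that distinct $s(I)$ necessarily end at distinct leaves, which is immediate from the tree structure. The argument is essentially an information-theoretic leaf-counting bound on a deterministic decision tree.
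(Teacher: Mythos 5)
Your argument is correct: the deterministic algorithm induces a binary decision tree on answer prefixes, each defective set $I$ follows a unique root-to-leaf path labelled by $s(I)$, distinct answer sequences reach distinct leaves, and a binary tree of depth $h=\max_I w(I)$ has at most $2^h$ leaves (even with leaves at varying depths), giving $M\le 2^{\max_I w(I)}$. The paper itself does not prove this lemma --- it imports it from~\cite{BBHKS18} --- and your leaf-counting proof is exactly the standard argument one would supply, with the one point worth stating explicitly (as you do) being that determinism makes the next test a well-defined function of the answer prefix.
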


\ignore{We will use the following result that, in particular, shows that the
upper bound in Lemma~\ref{fest} is tight.}
The following Lemma assists us to prove the result declared by Theorem~\ref{lbound}.
\begin{lemma}\label{lllo} Any deterministic adaptive algorithm such
that, if the number of defectives $d$ is less than or equal $d_1$ it
outputs $0$ and if it is greater than  $d_2$ it outputs $1$, must make
$$\Omega\left(d_1\log\frac{n}{d_2}\right)$$ tests.

In particular, when $d_1=\ell/\Delta^2$ and $d_2=\ell/\Delta$ we get
$$\Omega\left(\frac{\ell}{\Delta^2}\left(\log\frac{n}{\ell}+\log
\Delta\right)\right)$$ tests.
\end{lemma}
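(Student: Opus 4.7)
The plan is to use Lemma~\ref{lb} (so it suffices to lower bound the number $M$ of distinct answer sequences produced over all possible defective sets) together with a classical ``union closure'' argument for adaptive group testing.

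First, I would take as the family of bad instances $\mathcal{F}=\binom{[n]}{d_1}$, that is, every $I\subseteq[n]$ with $|I|=d_1$. Since $|I|=d_1\le d_1$, the algorithm must output $0$ on each such $I$. Let $s(I)$ denote the answer sequence the algorithm observes on defective set $I$, and for a sequence $s$ define
\[
U_s \;=\; \bigcup_{I\in\mathcal{F},\, s(I)=s} I.
\]

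The key step, and the one that does the real work, is the claim $|U_s|\le d_2$. To prove it, I would run the algorithm on the hypothetical defective set $U_s$ and show by induction on the query index $i$ that it asks the same test $Q_i$ and gets the same answer as on every $I\in\mathcal{F}$ with $s(I)=s$. Inductively the queries coincide because the algorithm is deterministic and the previous answers agree; for the answers themselves, if $s_i=1$ then every such $I\subseteq U_s$ already intersects $Q_i$, so $Q_i\cap U_s\ne\emptyset$, while if $s_i=0$ then $Q_i\cap I=\emptyset$ for every such $I$, hence $Q_i\cap U_s=\emptyset$. Therefore the algorithm produces the sequence $s$ on input $U_s$ and so outputs $0$; but if $|U_s|>d_2$ the algorithm must output $1$, a contradiction. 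This is the only nontrivial step and I expect it to be the main obstacle---mainly in phrasing the adaptive argument cleanly.

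Once $|U_s|\le d_2$ is established, the sets in $\mathcal{F}$ mapped to $s$ are all $d_1$-subsets of $U_s$, so at most $\binom{d_2}{d_1}$ of them share a sequence. Hence
\[
M \;\ge\; \frac{|\mathcal{F}|}{\binom{d_2}{d_1}} \;=\; \frac{\binom{n}{d_1}}{\binom{d_2}{d_1}} \;\ge\; \left(\frac{n/d_1}{e\,d_2/d_1}\right)^{d_1} \;=\; \left(\frac{n}{e\,d_2}\right)^{d_1},
\]
using (\ref{eqbinom}) on the numerator and denominator. By Lemma~\ref{lb} the query complexity is at least $\log M \ge d_1\log(n/(e d_2)) = \Omega(d_1\log(n/d_2))$.

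Finally, I would specialize to $d_1=\ell/\Delta^2$ and $d_2=\ell/\Delta$ to obtain
\[
\Omega\!\left(\frac{\ell}{\Delta^2}\log\frac{n\Delta}{\ell}\right)
\;=\;
\Omega\!\left(\frac{\ell}{\Delta^2}\left(\log\frac{n}{\ell}+\log\Delta\right)\right),
\]
which gives the ``in particular'' part of the lemma and matches the upper bound of Lemma~\ref{fest}, showing it is tight. The deduction of Theorem~\ref{lbound} will then follow by choosing $\ell=D$ (so that $d_1=D/\Delta^2$, $d_2=D/\Delta$), since any algorithm estimating $d$ within a multiplicative factor $\Delta$ in particular distinguishes $d\le D/\Delta^2$ (forcing $\hat d\le D/\Delta$) from $d>D/\Delta$ (forcing $\hat d>D/\Delta^2$).
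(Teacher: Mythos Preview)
Your proof of the lemma is correct and is essentially the paper's own argument: form the union $U_s$ (the paper calls it $I'$) of all $d_1$-sets sharing a given answer sequence $s$, show that $U_s$ itself produces the sequence $s$ (you do this by an explicit induction on the query index, the paper by a first-disagreement contradiction using $I\subseteq I'$; the content is identical), conclude $|U_s|\le d_2$, and then count and invoke Lemma~\ref{lb} exactly as the paper does.

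One caveat on your closing remark about deducing Theorem~\ref{lbound}: with $d_1=D/\Delta^2$ and $d_2=D/\Delta$ the two output ranges you name, $\hat d\le D/\Delta$ and $\hat d>D/\Delta^2$, overlap, so the estimation algorithm need not distinguish the two regimes and the reduction as phrased does not go through. The paper instead takes $d_1=D/\Delta^2-1$ and $d_2=D$, which makes the intervals $\hat d\le \Delta d_1<D/\Delta$ and $\hat d\ge d_2/\Delta=D/\Delta$ genuinely disjoint.
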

\begin{proof} Let $A$ be such algorithm. Let $s(I)$ be the sequence of
answers to the tests of $A$ when the set of defective items is $I$.
Consider a set $I$ of size $d_1$ and let ${\cal J}=\{J\subseteq [n]: |J|=d_1,
s(J)=s(I)\}$. Let $I'=\cup_{J\in {\cal J}}J$. We claim that
$s(I')=s(I)$. Suppose for the contrary, $s(I')\not=s(I)$. Then, since
$I\subseteq I'$, there is a test $Q\subseteq [n]$ that is asked by $A$
that gives answer $0$ to $I$ and $1$ to $I'$. Since $I'\cap
Q\not=\emptyset$, there is a subset $J'\in {\cal J}$ such that $J'\cap
Q\not=\emptyset$ and therefore $Q$ gives answer $1$ to $J'$. Then
$s(J')\not=s(I)$ and we get a contradiction.

Since $s(I')=s(I)$ and algorithm $A$ outputs $0$ to $I$, it also outputs
$0$ to $I'$. Therefore, $|I'|\le d_2$. Therefore
$|{\cal J}|\le N:={d_2 \choose d_1}$. That is, for every possible
sequence of answers $s'$ of the algorithm $A$, there is at most $N$ sets
of size $d_1$ that get the same sequence of answers. Since there are
$L:={n\choose d_1}$ such sets, the number of different sequences of
answers that $A$ might have must be at least $L/N$. By Lemma~\ref{lb}, the
number of tests that the algorithm makes is at least
$$\log\frac{{n\choose d_1}}{{d_2 \choose d_1}}\ge \log
\left(\frac{n}{ed_2}\right)^{d_1}=\Omega\left(d_1\log\frac{n}{d_2}\right).$$
\end{proof}

The conclusions established by Lemma~\ref{lllo} show that the upper bound from Lemma ~\ref{fest} is tight.  Moreover, using these results, we provide  the following proof for Theorem~\ref{lbound}.
\begin{proof}
Let $d_1=D/\Delta^2-1$ and $d_2=D$. For sets of size less than or equal
$d_1$ the algorithm returns $d_1/\Delta\le \hat d\le \Delta d_1$ and for
sets of equal to $d_2$ the algorithm returns $d_2/\Delta< \hat
d\le \Delta d_2$. Since $\Delta d_1<d_2/\Delta$, the above intervals are
disjoint. So, the algorithm can distinguish between sets of size less
that or equal to $d_1$ and sets of size greater than $d_2$. By
Lemma~\ref{lllo} the algorithm must make at least
$$\Omega\left(\frac{D}{\Delta^2}\log\frac{n}{D}\right)$$ tests.
\end{proof}

\section{Polynomial Time Adaptive Algorithm}
In this section, we prove:
\begin{theorem}
\label{thmUBEA}
Let $D$ be some upper bound on the number of defective items $d$ and $\Delta >1$. Then, there is a linear time deterministic adaptive algorithm that makes
$$O\left (\frac{D}{\Delta^2}\left(\log \frac{n}{D} +\log\Delta\right)\right )$$
tests and outputs $\hat{d}$ such that $\frac{d}{\Delta} \leq \hat{d} \leq d\Delta.$
\end{theorem}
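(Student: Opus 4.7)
The plan is to mirror the non-adaptive construction of Theorem~\ref{thmUBE} but replace the non-constructive subroutine of Lemma~\ref{fest} by an explicit adaptive procedure based on binary splitting. The reduction $\mathcal{T}$ of Section~\ref{Msection} only uses the subroutine $\mathcal{A}(\ell,\Delta)$ as a $\{0,1\}$-valued oracle separating the regimes $d\le \ell/\Delta^2$ and $d\ge \ell/\Delta$, and a linear-time adaptive algorithm can implement a stronger version of this oracle. In fact, a single well-chosen call already determines $\hat d$ and the outer loop over $\ell$ can be avoided entirely.

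Concretely, I would set $K:=\lceil D/\Delta^2\rceil+1$ and run Hwang's generalized binary splitting on $[n]$, terminating as soon as $K$ defectives have been isolated, or else when every remaining subset has been verified to be negative. If the procedure halts after finding $k<K$ defectives, then $d=k$ and I output $\hat d=k$; otherwise I output $\hat d=D/\Delta$. For correctness: when $k<K$ the estimate is exact; when $k=K$ the inequalities $d\le D$ and $d>D/\Delta^2$ together give $d/\Delta\le D/\Delta\le d\Delta$, so $\hat d=D/\Delta$ is a valid $\Delta$-approximation.

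For the test count, I would invoke the standard fact that Hwang's generalized binary splitting with block size $\lfloor n/K\rfloor$ isolates $\min(d,K)$ defectives using $O(K\log(n/K))$ tests. Substituting $K=\Theta(D/\Delta^2)$ yields
$$O\!\left(\frac{D}{\Delta^2}\,\log\frac{n\Delta^2}{D}\right) = O\!\left(\frac{D}{\Delta^2}\left(\log\frac{n}{D}+\log\Delta\right)\right),$$
matching the claimed bound. Linear running time follows by storing the queue of still-positive subsets as pointers into a single array of size $O(n)$, so that each element contributes $O(1)$ amortised work.

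The main technical point I expect to need to argue carefully is the $O(K\log(n/K))$ bound for early-terminated binary splitting: a naive ``binary-search one defective at a time'' implementation only yields $O(K\log n)$, which would introduce a spurious $\log D$ factor in place of the desired $\log\Delta$. Overcoming this relies crucially on Hwang's block-size choice $\lfloor n/K\rfloor$, whose analysis amortises the cost of negative block tests against the $K$ positive root-to-leaf binary-search paths; this is the step where I would cite an existing result from the group-testing literature rather than reprove it.
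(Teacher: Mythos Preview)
Your proposal is correct, but the paper takes a more direct and self-contained route. It does not go through Hwang's algorithm or the reduction $\mathcal T$ at all; instead it gives an explicit breadth-first splitting procedure \textbf{Adaptive-dEstimate}: maintain a family $Q$ of disjoint positive subsets (initially $Q=\{[n]\}$), in each round halve every member of $Q$, test both halves, and keep only the positive ones; stop as soon as $|Q|>D/\Delta^2$ and output $\hat d=|Q|\cdot\Delta$, or output the exact count if all members of $Q$ become singletons first. Correctness is immediate from $|Q|\le d$ (disjoint positive sets) and $|Q|>D/\Delta^2\ge d/\Delta^2$ at termination. The test count is bounded by a two-line argument: the first $\log(D/\Delta^2)$ rounds cost at most $\sum_{i\le\log(D/\Delta^2)}2^i\le 2D/\Delta^2$ tests in total, and each of the remaining $\log n-\log(D/\Delta^2)$ rounds costs at most $2D/\Delta^2$ tests, giving $O\bigl((D/\Delta^2)\log(n\Delta^2/D)\bigr)$.

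Your Hwang-based argument is essentially a depth-first variant of the same idea (isolate defectives one at a time rather than splitting all positive sets in parallel), and it reaches the identical bound. The trade-off is that the paper's version is fully self-contained---the $O(K\log(n/K))$ bound drops out of the round-by-round count without any citation---whereas you defer that step to the literature; conversely, your version actually isolates defectives rather than just positive buckets, a slightly stronger primitive, though this extra strength is not needed here. Your opening paragraph about mirroring the non-adaptive construction and the outer loop of $\mathcal T$ is a detour: as you yourself observe, a single call already suffices, and the paper's adaptive proof never invokes $\mathcal T$.
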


We first describe the algorithm. The algorithm gets as an input the set of items $X=[n]$ and splits it into two equally-sized disjoint sets $Q_1$ and $Q_2$. The algorithm asks the queries defined by $Q_1$ and $Q_2$ and proceeds in the splitting process on the sets that yielded positive answers only. We call these sets \emph{defective sets}. As long as the algorithm gets less than $D/\Delta^2$ distinct defective sets, it continues to split and test. Two cases can happen. Either it gets $D/\Delta^2$ defective sets and then the algorithm outputs $\hat d=D/\Delta$, or the number of the defective sets is always less than $D/\Delta^2$ and then, the algorithm finds all the defective items and returns their exact number. The algorithm is given in Figure~\ref{PTA}. The algorithm invokes the procedure ${\bf Split}(X)$ that on an input $X=\{a_1, a_2, \ldots, a_n \}$, it returns the set $W$ where $W:=\{X_1,X_2\}$ such that $X_i\subseteq X$, $X_1 = \{a_1, a_2, \ldots, a_{\left \lfloor{n/2}\right \rfloor }\}$, $X_2 = \{a_{\left \lfloor{n/2}\right \rfloor +1},  \ldots, a_n\}$ if $|X|\geq 2$, and $W:=\{X\}$ otherwise.
\begin{figure}[h!]
  \begin{center}
  \fbox{\fbox{\begin{minipage}{28em}
  \begin{tabbing}
  xxxx\=xxxx\=xxxx\=xxxx\= \kill
  {\texttt{{\bf Adaptive-dEstimate}} $({\cal O}_I, X,\Delta, D)$}\\
  1) $Q\leftarrow {X}, S\leftarrow \emptyset$\\
  2) While $(|Q|\leq D/\Delta^2)$ do:\\
  \> 2.1) For each $Q_i\in Q$ \\
  \>\> $\left \{Q_i^{(1)}, Q_i^{(2)} \right \} \leftarrow{\bf Split}(Q_i) $\\
  \>\> If $(Q_i^{(1)}(I)=1 )$ then $S\leftarrow S\cup \{Q_i^{(1)}\} $\\
  \>\> If $(Q_i^{(2)}(I)=1 )$ then $S\leftarrow S\cup \{Q_i^{(2)}\} $\\
  \> 2.2) If $\forall S_i\in S, |S_i|=1$ \\
  \>\>$ \hat{d}\leftarrow |S|$\\
  \>\> Output ($\hat{d}$)\\
  \> Else\\
  \>\> $Q\leftarrow S, S\leftarrow \emptyset.$\\
  3) $\hat{d}\leftarrow |Q|\cdot \Delta$.\\
  4) Output ($\hat{d}$)\\
    \end{tabbing}
  \end{minipage}}}
  \end{center}
	\caption{Algorithm {\bf Adaptive-dEstimate} to estimate the number of defective items. }
	\label{PTA}
	\end{figure}
	
\begin{lemma}
Algorithm {\bf Adaptive-dEstimate} is a deterministic adaptive algorithm that makes
$$2\frac{D}{\Delta^2}\log{\frac{n\Delta^2}{D}}=O\left (\frac{D}{\Delta^2}\left ( \log{\frac{n}{D}}+\log {\Delta}\right)\right )$$
tests and outputs an estimation $\hat{d}$ such that:
$$\frac{d}{\Delta} \leq \hat{d}\leq d\Delta.$$
\end{lemma}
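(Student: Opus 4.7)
The plan is to establish an invariant about $Q$ and then read off both the output correctness and the test-count bound from it.

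The invariant I would maintain is that after every iteration $k$ of the while loop, $Q$ (denote it $Q_k$) is a pairwise-disjoint collection of subsets of $[n]$, each containing at least one defective item, and together covering every defective in $I$. This follows by induction: $Q_0=\{X\}$ trivially satisfies it, and step~2.1 replaces each $Q_i\in Q$ by its two halves and retains only those that test positive, i.e., contain a defective. The invariant yields three facts I will reuse: $|Q_k|\le d$, the sequence $|Q_k|$ is non-decreasing (each defective set has at least one defective child), and $|Q_k|\le 2|Q_{k-1}|$, so $|Q_k|\le 2^k$.

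For correctness of $\hat d$, I split on the exit. If the while loop falls through to step~3, then the loop guard has failed, so $|Q|>D/\Delta^2$. Combined with $|Q|\le d$ and $D\ge d$ this gives $|Q|\Delta\le d\Delta$ and $|Q|\Delta>D/\Delta\ge d/\Delta$, as required. If instead the algorithm exits inside step~2.2, every member of $S$ is a singleton containing a defective and the union of $S$ covers all defectives, so $|S|=d$ exactly and the output is correct.

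For the test count, iteration $k+1$ performs at most $2|Q_k|$ tests (two per set being split), so the total is at most $2\sum_{k=0}^{K-1}|Q_k|$, where $K$ is the number of iterations. Writing $N:=D/\Delta^2$, the loop body is entered only when $|Q_k|\le N$, so $|Q_k|\le\min(2^k,N)$ for $k<K$; and $K\le\lceil\log n\rceil+O(1)$ since sets cannot be refined below size~$1$. Splitting the sum at $k_0=\lceil\log N\rceil$, the geometric piece $\sum_{k<k_0}2^k\le 2N$ and the flat piece $(K-k_0)\cdot N\le N(\log(n/N)+O(1))$ together give total tests $O(N\log(n\Delta^2/D))=O((D/\Delta^2)(\log(n/D)+\log\Delta))$, matching the stated bound. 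The one delicate point is the bound $K=O(\log n)$: even when the defectives cluster so that $|Q_k|$ grows very slowly, the recursion depth is forced by the fact that sets cannot shrink below a single element, and this $\log n$ (as opposed to the ideal $\log N$) is precisely what produces the additive $\log\Delta$ overhead relative to the lower bound of Theorem~\ref{lbound}.
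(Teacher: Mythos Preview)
Your proof is correct and follows essentially the same approach as the paper's: both argue correctness by the invariant $|Q|\le d$ (disjoint defective sets) and bound the test count by splitting the iterations into an initial geometric phase where $|Q_k|\le 2^k$ and a flat phase where $|Q_k|\le D/\Delta^2$, with the total number of rounds bounded by $\log n$. Your explicit statement of the invariant (pairwise-disjoint sets, each containing a defective, together covering $I$) makes the argument slightly cleaner than the paper's, which leaves these facts implicit.
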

\begin{proof}
If $d\leq \frac{D}{\Delta^2}$, then the splitting process in step 2 of the algorithm proceeds until  each defective item belongs to a distinct set. Eventually, the condition in step 2.2 is met and the algorithm outputs the exact value of $d$.
If $d> {D/\Delta^2}$, then the splitting process stops when the number of defective sets $|Q|$ exceeds $D/\Delta^2$. The algorithm halts and outputs $\hat{d}=|Q|\Delta$. Obviously, $|Q| \leq d$. Therefore, $\hat{d}=|Q|\Delta \leq d\Delta.$ Moreover, $|Q|> D/\Delta^2\geq d/\Delta^2$ which implies that $\hat{d} \geq d/\Delta.$

The number of iterations cannot exceed $\log n$ iterations. In the first $\log (D/\Delta^2)$ iterations, in the worst case scenario, the algorithm splits its current set $Q_i$ on each iteration into two sets $Q_i^{(1)}$ and $Q_i^{(2)}$ such that $Q_i^{(1)}(I)=Q_i^{(2)}(I) = 1$. Therefore, the number of tests that the algorithm asks over all the first $\log({D}/{\Delta^2})$ iterations is at most
$$\sum_{i=1}^{\log({D}/{\Delta^2})}2^{i} \leq 2\frac{D}{\Delta^2}.$$ Since $|Q| \leq{D}/{\Delta^2}$, in the other $\log n - \log ({D}/{\Delta^2})$ iterations, the algorithm makes at most $2D/{\Delta^2}$ tests each iteration. So, the total number of tests is at most
$$2\frac{D}{\Delta^2}\left (\log n -\log  \frac{D}{\Delta^2}\right ) +2\frac{D}{\Delta^2}=O\left (\frac{D}{\Delta^2}\log \frac{n\Delta^2}{D}\right )$$.
\end{proof}

\section{Polynomial Time Non-Adaptive Algorithm}
In this section, we  show how to use expanders, condensers and extractors to construct deterministic non-adaptive algorithms for defectives number estimation. We prove:

\begin{theorem}
\label{thmUBENA}
Let $D$ be some upper bound on the number of defective items $d$ and $\Delta >1$. Then, there is a polynomial time deterministic non-adaptive algorithm that makes
$$\min\left(D^{o(1)}, 2^{\log^3(\log n)}\right)\cdot \frac{D}{\Delta^2}\log n$$
tests and outputs $\hat{d}$ such that $\frac{d}{\Delta} \leq \hat{d} \leq d\Delta.$
\end{theorem}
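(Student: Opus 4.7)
My plan is to derandomize Lemma~\ref{fest} using explicit pseudorandom combinatorial objects, and then combine the resulting explicit test families via the reduction already proved in Lemma~\ref{Reduction}. Unpacking Lemma~\ref{fest}, the object it produces is nothing more than a $t \times n$ binary matrix $M$ with the following \emph{separating} property: for a suitable threshold $\eta$, every $I \subseteq [n]$ with $|I| \le \ell/\Delta^2$ leaves strictly more than $\eta t$ rows of $M$ with support disjoint from $I$, while every $I$ with $|I| \ge \ell/\Delta$ leaves strictly fewer than $\eta t$ such rows. Lemma~\ref{fest} only proves existence; here I need an explicit polynomial-time construction of such an $M$ for each $\ell$ that appears in the outer loop of algorithm $\mathcal{T}$ of Figure~\ref{AlgT}.

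The natural tool is an unbalanced bipartite expander (equivalently, a lossless condenser). Take a left-regular bipartite graph $G:[n]\times[d_L]\to[m]$ which is a $(K,\epsilon)$-expander with $K=\ell/\Delta^2$ and a small absolute constant $\epsilon$, and let $M$ be its $m\times n$ bipartite incidence matrix (so $t=m$). The first step is a clean combinatorial lemma translating expansion into separation: for $|I|\le K$ one has $|N(I)|\ge (1-\epsilon)d_L|I|$, so many right-vertices (rows) are hit by at most one element of $I$ and a counting argument gives a lower bound on the number of rows entirely missed by $I$; conversely, when $|I|\ge \ell/\Delta$ the union $N(I)$ covers almost all of $[m]$, forcing the number of missed rows to be small. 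After tuning $\epsilon$ to a small constant compatible with the $\eta$ of Lemma~\ref{fest}, this yields exactly the separating property we need, with $t=m$ rows.

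To obtain the two quantitative bounds in the theorem statement, I would plug in two explicit expander constructions. First, the Guruswami--Umans--Vadhan extractor gives explicit expanders with $m = K^{1+o(1)}\log n$ (for constant $\epsilon$), which produces $\mathcal{A}(\ell,\Delta)$ with $(\ell/\Delta^2)^{1+o(1)}\log n$ tests; summing over $\ell=D/\Delta^i$ as in Lemma~\ref{Reduction} yields the explicit $D^{1+o(1)}\log(n)/\Delta^2$ bound (valid for $\Delta$ above the small constant $C$ of the footnote, where the $K^{o(1)}$ factor dominates the geometric tail). Second, using an alternative explicit condenser based on Parvaresh--Vardy style codes or iterated condenser composition, one obtains expanders with $m = K \cdot 2^{\log^3\log n}$ (at the cost of worse dependence on $\log n$); the same reduction then gives $(D/\Delta^2) \cdot 2^{\log^3\log n}$ tests.

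The main obstacle I expect is the careful parameter calibration: the expansion factor $\epsilon$, the left-degree $d_L$, and the threshold $K$ must all be set so that the deterministic separation strictly matches the probabilistic one in Lemma~\ref{fest}, including in the regime $1<\Delta\le 2$ where the gap between $e^{-1}$ and $e^{-\Delta}$ is small. A secondary obstacle is that explicit expander constructions typically require $K$ to lie in a specific range (neither too small nor too close to $n$); since the reduction in Lemma~\ref{Reduction} iterates $\ell$ over all powers of $\Delta$ down to a constant, I will need to patch the boundary cases with trivial test families (for instance the identity matrix when $\ell/\Delta^2$ is a constant) so the explicit construction applies cleanly at every scale, and verify that these patches do not spoil the overall test count.
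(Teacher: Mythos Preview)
Your high-level strategy matches the paper's exactly: build, for each scale $\ell$, an explicit test family that separates $|I|<\ell/\Delta^2$ from $|I|\ge\ell/\Delta$, then sum over the geometric sequence of $\ell$'s via the reduction of Lemma~\ref{Reduction} (restated in the paper as Lemma~\ref{fred}). The two explicit ingredients you name---GUV for the $D^{1+o(1)}$ factor and a lossless condenser (the paper uses the CRVW construction, \cite{CRVW02}) for the $2^{\log^3\log n}$ factor---are also the ones the paper plugs in.

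However, the combinatorial lemma you sketch translating expansion into separation has the inequalities pointing the wrong way. You set $K=\ell/\Delta^2$ and invoke $|N(I)|\ge(1-\epsilon)d_L|I|$ for $|I|\le K$ to obtain ``a lower bound on the number of rows entirely missed by $I$''; but expansion gives a \emph{lower} bound on $|N(I)|$, hence an \emph{upper} bound on the number of missed rows---the opposite of what you need for the small-$I$ side. Conversely, for $|I|\ge\ell/\Delta>K$ the expansion hypothesis says nothing, so the claim that $N(I)$ ``covers almost all of $[m]$'' is unsupported. The paper's argument (Lemma~\ref{expnA}) fixes both issues and is simpler: set the expansion parameter $k$ at (essentially) the \emph{large} threshold, threshold on the number of \emph{positive} tests, and use expansion for $|I|\ge k$ (any $k$-subset of $I$ already has $\ge ak$ neighbors) against the trivial degree bound $|N(I)|\le\delta|I|<ak$ for $|I|<ak/\delta$. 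With $a=(1-\epsilon)\delta$ this separates $(1-\epsilon)k$ from $k$, which is enough after choosing $k\approx\ell/\Delta$ (or $2\ell/\Delta^2$ when $\Delta>2$).

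One further divergence worth noting: for the $D^{1+o(1)}$ bound the paper does \emph{not} use the bare expander incidence matrix. It routes the GUV extractor through Cheraghchi's condenser--mixture framework (Lemmas~\ref{Mahdi} and~\ref{Main2}): the test matrix is obtained by replacing each $\hat m$-bit entry of the condenser table by its $2^{\hat m}$-bit indicator vector, and the decision rule counts columns $s$ with $\Pr_{r}[\mathcal{M}^{(s)}_r\in S_r]=1$ rather than thresholding on positive answers. If you try instead to read GUV directly as a $(k,a)$-expander and apply Lemma~\ref{expnA}, the right-vertex count carries an extra $k^{\alpha}$ factor that spoils the $D^{1+o(1)}$ claim, so this detour is not optional.
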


\subsection{Algorithms Using Expanders}
Let $G$ be a bipartite graph $G=G(L,R,E)$ with left vertices $L=[n]$, right vertices $R=[m]$ and edges $E\subseteq L\times R$. For each edge $(i,j)\in E$, it holds that the endpoint $i\in L$ and $j\in R$. For a vertex $v\in L$, define $\Gamma(v)$ to be the set of the neighbours of $v$ in $G$ i.e. $\Gamma(v):=\{u\in R| (v,u)\in E\}$. For a subset $S\subseteq L$, we define $\Gamma(S)$ to be  the set of neighbours of $S$, meaning $\Gamma(S):=\cup_{v\in S}\Gamma(v)$. For a vertex $v\in L$, the \emph{degree} of $v$ is defined as $deg(v):=|\Gamma(v)|$. We say that a bipartite graph $G=G(L,R,E)$ is a {\it $(k,a)$-expander $\delta$-regular bipartite graph} if, the degree of every vertex in $L$ is $\delta$, and for every left-subset $S\subseteq L$ of size at most $k$, we have $|\Gamma(S)|\ge a|S|$.

\ignore{ is a {\it $(k,a)$-expander $\delta$-regular bipartite graph} if the degree of each vertex in $L$ is $\delta$ and for every $S\subseteq L$ such that $|S| \leq k$
We will use the sets $L=[n]$ and $R=[m]$ and write $i-j\in E$ to say that an edge with the endpoints $i\in L$ and $j\in R$ is in $E$.
For a vertex $v\in L$ we denote by $\Gamma(v)$ the set of neighbours of $v$
and for set $S\subseteq L$ we denote by $\Gamma(S)$ the set of neighbours of $S$, $\Gamma(S)=\cup_{v\in S}\Gamma(v)$.

We prove the following
}

\begin{lemma} \label{expnA}Let $X=[n]$ be a set of items and $I\subseteq [n]$ is the set of defective items such that $|I|=d$ is unknown to the algorithm. Let $G=G(L,R,E)$ be a $(k,a)$-expander $\delta$-regular bipartite graph with $|L|=n$ and $|R|=m$. Then, there is a deterministic non-adaptive algorithm $A$, such that for $n$ items, it makes $m$ tests and satisfies:
\begin{enumerate}
\item If $|I|< ak/\delta$, then $A$ outputs $0$.
\item If $|I| \geq k$, then $A$ outputs $1$.
\end{enumerate}
\end{lemma}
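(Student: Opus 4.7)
The plan is to use the right vertices of the expander to define the tests. For each $j \in R$, I would set $Q_j := \{i \in L : (i,j) \in E\}$, giving exactly $m = |R|$ non-adaptive tests. The key observation is that $Q_j(I) = 1$ if and only if $j \in \Gamma(I)$, so the number of tests that return a positive answer is precisely $|\Gamma(I)|$. The algorithm $A$ then simply counts the positive answers and outputs $1$ if this count is at least $ak$, and $0$ otherwise.

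To prove correctness, I would split on the two cases in the statement. For the first case, if $|I| < ak/\delta$, then by $\delta$-regularity of the left side, $|\Gamma(I)| \le \sum_{i \in I} |\Gamma(i)| = \delta |I| < \delta \cdot ak/\delta = ak$, so the count is strictly less than $ak$ and $A$ outputs $0$, as required. For the second case, if $|I| \ge k$, pick any subset $I' \subseteq I$ with $|I'| = k$; the $(k,a)$-expansion property applies and gives $|\Gamma(I')| \ge a|I'| = ak$. Since $\Gamma(I') \subseteq \Gamma(I)$, this yields $|\Gamma(I)| \ge ak$, so $A$ outputs $1$.

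There is no real obstacle here: the construction is direct, the test count is $m$ by definition, and the correctness reduces to the two elementary bounds $|\Gamma(I)| \le \delta|I|$ (from regularity) and $|\Gamma(I)| \ge ak$ (from expansion applied to an arbitrary $k$-subset of $I$). The only small point worth being explicit about is that the expansion hypothesis only applies to sets of size at most $k$, which is exactly why one restricts to a $k$-subset in the second case rather than applying expansion to $I$ directly.
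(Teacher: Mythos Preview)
Your proposal is correct and essentially identical to the paper's proof: both define one test per right vertex via $Q_j=\{i:(i,j)\in E\}$, observe that the number of positive answers equals $|\Gamma(I)|$, and threshold at $ak$, using $\delta$-regularity for the upper bound and expansion for the lower bound. Your explicit restriction to a $k$-subset in the second case is a small expository improvement over the paper, which just asserts $|\Gamma(I)|\ge ak$ directly when $|I|\ge k$.
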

\begin{proof} For every $j\in R$, we define the test $T^{(j)}=\{i | (i,j)\in E\}$. The number of tests is $|R|=m$. If $|I|\ge k$,  then $|\Gamma(I)|\ge ak$. Therefore, at least $ak$ tests will give positive answer $1$. If $|I|<ak/\delta$, then, since the degree of every vertex in $L$ is $\delta$, we have $|\Gamma(I)|\le \delta|I|<ak$. This shows that, for this case, at most $ak-1$ tests give the answer $1$. Hence, we can distinguish between the two cases.
\end{proof}

Following the same proof of Lemma~\ref{Reduction} with algorithm ${\cal T}$ in Figure~\ref{AlgT}, we have:
\begin{lemma}\label{fred} 
Let $A(\ell, \Delta)$ be a deterministic non-adaptive algorithm such that, for $n$ items, it makes $m(\ell,\Delta)$ tests and satisfies: 
\begin{enumerate}
\item If $|I|< \ell/\Delta^2$, then $A$ outputs $0$.
\item If $|I| \geq \ell/\Delta$, then $A$ outputs $1$.
\end{enumerate}
Then, there is a deterministic non-adaptive algorithm ${\cal T}$ such that, given $D>d$, for $n$ items it makes
$$\sum_{i=0}^{\lceil \log D/\log \Delta\rceil} m\left(\frac{D}{\Delta^i},\Delta\right)$$ tests and outputs $\hat d$ that satisfies $d/\Delta\le \hat d\le \Delta d$.
\end{lemma}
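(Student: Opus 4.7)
The plan is to instantiate Algorithm $\mathcal{T}$ from Figure~\ref{AlgT} verbatim, but with the black-box subroutine $A(\ell,\Delta)$ playing the role of $\mathcal{A}(\ell,\Delta)$. Because every call $A(D/\Delta^i,\Delta)$ is itself non-adaptive, I can bundle all the tests used by $A(D,\Delta), A(D/\Delta,\Delta),\ldots, A(D/\Delta^{\lceil \log D/\log\Delta\rceil},\Delta)$ into one parallel batch; the outer procedure only reads the $\{0,1\}$ outputs of each block after the batch finishes, so $\mathcal{T}$ remains non-adaptive. By construction the total number of tests is exactly $\sum_{i=0}^{\lceil \log D/\log\Delta\rceil} m(D/\Delta^i,\Delta)$, which gives the stated query bound without any further calculation.

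For correctness I would repeat the argument in the proof of Lemma~\ref{Reduction}. Let $r$ be the smallest index with $A(D/\Delta^r,\Delta)=1$; such an $r$ exists (assuming $d\ge 1$) because once $D/\Delta^i\le \Delta^2$ the threshold condition $d\ge (D/\Delta^i)/\Delta^2$ forces $A$ to output $1$. The algorithm returns $\hat d = D/\Delta^{r+1}$. If $r=0$, then $A(D,\Delta)=1$ yields $d\ge D/\Delta^2$, so $\hat d=D/\Delta\le \Delta d$, and since $d\le D$ we also have $\hat d\ge d/\Delta$. If $r\ge 1$, then $A(D/\Delta^{r-1},\Delta)=0$ forces $d<D/\Delta^{r}$ (from the contrapositive of condition~2), giving $\hat d=D/\Delta^{r+1}\ge d/\Delta$; and $A(D/\Delta^r,\Delta)=1$ forces $d\ge D/\Delta^{r+2}$ (from the contrapositive of condition~1), giving $\hat d=D/\Delta^{r+1}\le \Delta d$.

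There is no real obstacle here; the lemma is a structural reduction, and the only subtlety is confirming that the non-adaptivity is preserved when the outer algorithm ``waits'' for the inner outputs. Since the outer logic is a deterministic function of the batch of answers (a simple arg-min on the $A$-outputs), no test depends on any other test's result, and non-adaptivity carries over. Thus the black-box reduction yields a non-adaptive algorithm with the claimed query count and guarantee $d/\Delta \le \hat d\le \Delta d$.
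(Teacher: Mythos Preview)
Your proof is correct and follows exactly the paper's approach---indeed the paper's own proof of this lemma is simply a pointer back to the proof of Lemma~\ref{Reduction} and the algorithm in Figure~\ref{AlgT}. One small slip: in your existence argument for $r$, the condition $d\ge (D/\Delta^i)/\Delta^2$ does \emph{not} force $A$ to output $1$ (only $d\ge (D/\Delta^i)/\Delta$ does, by hypothesis~2); the easy fix is to observe that at $i=\lceil\log D/\log\Delta\rceil$ we have $D/\Delta^i\le 1$, hence $(D/\Delta^i)/\Delta<1\le d$ and condition~2 fires.
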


The parameters of the explicit construction of a $(k,a)$-expander $\delta$-regular bipartite graph from ~\cite{CRVW02} are summarised in the following lemma.
\ignore{
M. Capalbo, O. Reingold, S. Vadhan, and A. Wigderson, “Randomness
conductors and constant-degree expansion beyond the degree/2 barrier,”
in Proceedings of the 34th Annual ACM Symposium on Theory of
Computing (STOC), 2002, pp. 659--668.}

\begin{lemma}\label{CRVW02a} For any $k>0$ and $0 <\epsilon <1$, there is an explicit construction of a $(k,a)$-expander $\delta$-regular bipartite graph with
$$m=O(k\delta/\epsilon), \ \ \delta=2^{O(\log^3(\log n/\epsilon))}, \ \ a=(1-\epsilon)\delta.$$
\label{CRVW1}
\end{lemma}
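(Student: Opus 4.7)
The statement is the explicit lossless-expander construction of Capalbo, Reingold, Vadhan, and Wigderson~\cite{CRVW02}, specialised to the bipartite $\delta$-regular setting with specific numerical parameters. I would not attempt to reprove it from scratch; the plan is to invoke their main theorem as a black box and verify that the claimed triple $(m,\delta,a)$ is exactly what their construction yields.

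At a high level the construction is iterative: start from a small base gadget with near-lossless expansion (which can be found by exhaustive search, since it has constant size) and repeatedly apply a zig-zag / replacement product. Each iteration roughly squares the left-vertex count while keeping both the degree $\delta$ and the expansion ratio $a/\delta$ essentially unchanged. The key technical contribution of~\cite{CRVW02} is to track through the product a stronger invariant than vertex expansion---a conductance/min-entropy quantity they call a \emph{conductor}---and to show that this invariant is preserved \emph{without} losing a constant factor per iteration. This preservation is what allows $a$ to be pushed arbitrarily close to $\delta$, and it is the ingredient that was missing from earlier explicit constructions.

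To check the specific parameters I would fix $\epsilon$, run the iteration until the left side reaches $n$ vertices, and count. The number of iterations needed is roughly $O(\log \log n)$, and the degree growth per iteration is polynomial in the current gadget size and in $1/\epsilon$; telescoping yields $\delta=2^{O(\log^3(\log n/\epsilon))}$, exactly as recorded in their theorem. The bound $m=O(k\delta/\epsilon)$ then follows from the $\delta$-regularity on the left together with the $(1-\epsilon)\delta$ expansion: any left set $S$ of size $k$ has at least $(1-\epsilon)\delta k$ distinct right neighbours, and a standard right-side collapsing argument shows that $|R|$ can be taken as small as $O(k\delta/\epsilon)$ without destroying either property.

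The main obstacle, were one to try to avoid citing~\cite{CRVW02} and build such a graph from first principles, would be achieving \emph{lossless} expansion explicitly---that is, $a$ arbitrarily close to $\delta$ rather than merely $a\ge c\delta$ for some constant $c<1$. Random bipartite graphs have this property with high probability by a simple union bound, but all spectral / Ramanujan-type explicit constructions cap $a$ at roughly $\delta/2$ because of the eigenvalue bound, which is insufficient for the $(1-\epsilon)\delta$ expansion needed here (and, downstream in Lemma~\ref{expnA}, for separating $|I|<\ell/\Delta^2$ from $|I|\ge \ell/\Delta$ for general $\Delta$). Surmounting this eigenvalue barrier is precisely the content of the conductor machinery of~\cite{CRVW02} and is by far the hardest step; reproducing it within this paper is neither necessary nor feasible.
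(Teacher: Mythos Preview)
Your approach matches the paper's exactly: the paper does not prove this lemma at all but simply records it as a summary of the parameters obtained in~\cite{CRVW02}, with no further argument. Your proposal correctly treats it as a black-box citation; the additional high-level sketch of the zig-zag/conductor machinery you provide is accurate but goes beyond anything the paper itself offers.
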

\ignore{By Lemma (\ref{expnA}) and (\ref{CRVW1}) we have:}
We now prove:
\begin{lemma} There is a polynomial time deterministic non-adaptive algorithm that makes
$$\frac{D}{\Delta^2}\cdot 2^{O(\log^3(\log n))} =\frac{D}{\Delta^2}\cdot \mbox{{\rm quasipoly}}(\log n)$$
tests and outputs $\hat d$ that satisfies $$\frac{d}{\Delta} \le \hat d\le \Delta d.$$
\end{lemma}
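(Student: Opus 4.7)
The plan is to instantiate the generic pipeline already established in the paper: first use the explicit CRVW expander of Lemma~\ref{CRVW02a} together with Lemma~\ref{expnA} to manufacture a non-adaptive distinguisher $A(\ell,\Delta)$ with thresholds $\ell/\Delta^2$ and $\ell/\Delta$, and then feed $A$ into the reduction of Lemma~\ref{fred} to obtain the estimator. The only nontrivial work is calibrating the expander parameters so that the gap ratio delivered by Lemma~\ref{expnA} is exactly $\Delta$, and then tracking the cost through the geometric sum in Lemma~\ref{fred}.

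For the parameter choice, I would apply Lemma~\ref{CRVW02a} with
$$\epsilon \;=\; \frac{\Delta-1}{\Delta}\quad\text{and}\quad k \;=\; \lceil \ell/\Delta\rceil.$$
This gives a $(k,a)$-expander with $a=(1-\epsilon)\delta=\delta/\Delta$, so $ak/\delta=k/\Delta=\ell/\Delta^2$. Lemma~\ref{expnA} then yields a non-adaptive algorithm $A(\ell,\Delta)$ that outputs $0$ when $|I|<\ell/\Delta^2$ and $1$ when $|I|\ge \ell/\Delta$, using
$$m(\ell,\Delta) \;=\; O\!\left(\frac{k\delta}{\epsilon}\right) \;=\; O\!\left(\frac{\ell\,\delta}{\Delta-1}\right)$$
tests. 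The key numerical point is that since $\Delta\ge 1+\Omega(1)$, the quantity $1/\epsilon=\Delta/(\Delta-1)$ is bounded by a constant (uniformly in $n$), so
$$\delta \;=\; 2^{O(\log^3(\log n/\epsilon))} \;=\; 2^{O(\log^3\log n)}.$$

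Finally, I would plug $A(\ell,\Delta)$ into Lemma~\ref{fred}. The total number of tests is
$$\sum_{i=0}^{\lceil \log D/\log\Delta\rceil} m\!\left(\frac{D}{\Delta^i},\Delta\right)
\;=\; O\!\left(\frac{\delta}{\Delta-1}\right)\sum_{i=0}^{\infty}\frac{D}{\Delta^i}
\;=\; O\!\left(\frac{D\,\delta}{(\Delta-1)^2}\right)
\;=\; \frac{D}{\Delta^2}\cdot 2^{O(\log^3\log n)},$$
where the last equality uses $(\Delta-1)^2=\Theta(\Delta^2)$ under $\Delta\ge 1+\Omega(1)$. The correctness guarantee $d/\Delta\le \hat d\le \Delta d$ and the fact that the construction runs in polynomial time both come for free from Lemmas~\ref{fred} and~\ref{CRVW02a}. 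The main (mild) obstacle is simply verifying that the chosen $\epsilon$ keeps $\delta$ inside the desired quasipoly$(\log n)$ bound across the full allowed range of $\Delta$, which is exactly what the assumption $\Delta\ge 1+\Omega(1)$ secures.
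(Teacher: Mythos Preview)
Your pipeline is the right one, but the parameter choice $k=\lceil\ell/\Delta\rceil$ loses a factor of $\Delta$ and the final bound does not come out as claimed. With $k=\ell/\Delta$ and $\epsilon=(\Delta-1)/\Delta$ you get $m(\ell,\Delta)=O(k\delta/\epsilon)=O(\ell\delta/(\Delta-1))$, which for large $\Delta$ is $\Theta(\ell\delta/\Delta)$, not $\Theta(\ell\delta/\Delta^2)$. In the displayed chain there is also an arithmetic slip: $\sum_{i\ge 0}D/\Delta^i=D\Delta/(\Delta-1)=\Theta(D)$ (not $\Theta(D/(\Delta-1))$), so the sum equals $O\!\big(D\delta/(\Delta-1)\big)$, not $O\!\big(D\delta/(\Delta-1)^2\big)$. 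The subsequent identification with $\tfrac{D}{\Delta^2}\cdot 2^{O(\log^3\log n)}$ is therefore off by a factor of $\Delta$, which is significant since $\Delta$ can grow with $n$ or $D$ (up to $\sqrt{D}$).

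The paper avoids this by \emph{capping the gap ratio requested from the expander at a constant}: set $r=\min(\Delta,2)$, $\epsilon=1-1/r$, and $k=r\ell/\Delta^2$. Then $ak/\delta=\ell/\Delta^2$ as in your argument, but now $k=O(\ell/\Delta^2)$ and $\epsilon=\Theta(1)$, so $m(\ell,\Delta)=O(k\delta/\epsilon)=O((\ell/\Delta^2)\,\delta)$ with $\delta=2^{O(\log^3\log n)}$. For $\Delta\ge 2$ one has $k=2\ell/\Delta^2\le \ell/\Delta$, and for $\Delta<2$ one has $k=\ell/\Delta$, so in both cases Lemma~\ref{expnA} still yields a distinguisher satisfying the hypotheses of Lemma~\ref{fred}. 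Summing $m(D/\Delta^i,\Delta)$ then gives $O\!\big((D/\Delta^2)\,\delta\cdot \Delta/(\Delta-1)\big)=\tfrac{D}{\Delta^2}\cdot 2^{O(\log^3\log n)}$ as required. The point is that Lemma~\ref{fred} only needs the upper threshold to be \emph{at most} $\ell/\Delta$, so there is no reason to pay $k=\ell/\Delta$ in the expander size when $k=2\ell/\Delta^2$ already suffices.
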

\begin{proof}
We use the expander in Lemma~\ref{CRVW02a}. Recall that $\Delta=1+\Omega(1)$. Let $r=\min(\Delta,2)$, $\epsilon=1-1/r$ and $k=r\ell/\Delta^2$. Then $a=\delta/r=2^{O(\log^3\log n)}$ and $m=m(\ell,\Delta)=(\ell/\Delta^2)2^{O(\log^3\log n)}$. By Lemma~\ref{expnA}, there is a deterministic non-adaptive algorithm $A$ such that for $n$ items, it makes $m(\ell,\Delta)$ tests and
\begin{enumerate}
\item If $|I|<ak/\delta=\ell/\Delta^2$ then $A$ outputs $0$.
\item\label{c2x} If $|I|\ge k= r\ell/\Delta^2$ then $A$ outputs $1$. 
\end{enumerate}
Algorithm $A$ trivially satisfies the first condition required by Lemma~\ref{fred}.  Consider item \ref{c2x}. If $\Delta<2$ then $r=\Delta$ and then if $|I|\ge k= \ell/\Delta$ then $A$ outputs $1$. If $\Delta>2$ then $r=2$ and then if $|I|\ge k= 2\ell/\Delta^2$ then $A$ outputs $1$. Since $2\ell/\Delta^2<\ell/\Delta$, if $|I|\ge  \ell/\Delta$ then $A$ outputs~$1$.

Now by Lemma~\ref{fred}, there is a deterministic non-adaptive algorithm ${\cal T}$ such that, given $D>d$, for $n$ items, it makes
$$\sum_{i=0}^{\lceil \log D/\log \Delta\rceil} m\left(\frac{D}{\Delta^i},\Delta\right)=\frac{D}{\Delta^2}\cdot 2^{O(\log^3(\log n))} $$ tests and outputs $\hat d$ that satisfies $d/\Delta\le \hat d\le \Delta d$.

\ignore{Let $k=D/\Delta^2$,  $\epsilon = 1/2$ and let $G$ be the expander from Lemma (\ref{CRVW1}). Let ${\cal A}$ be the algorithm from Lemma (\ref{expnA}). We use ${\cal A}$to design an estimation algorithm ${\cal B}$. Algorithm ${\cal B}$ runs ${\cal A}$ with the expander $G$. If ${\cal A}$ output $0$,  then $|I| \geq D/\Delta^2$ and hence, ${\cal B}$ outputs $\hat d = k\cdot \Delta\leq d\Delta$ and it hold that $d/\Delta \leq\hat d \leq d\Delta$.\color{blue} If ${\cal A}$ outputs $1$, then $|I|<D/(2\Delta^2)$. Therefore, in this case, ${\cal B}$ stops and outputs $\hat d = D/(2\Delta^3)$.

Nader: how can I assure that $\hat d < d\Delta$?
\color{black}}
\end{proof}
\ignore{Another construction from ~\cite{GUV07} gives the following result:
\ignore{
V. Guruswami, C. Umans, and S. Vadhan, “Unbalanced expanders and
randomness extractors from Parvaresh-Vardy codes,” Journal of the
ACM, vol. 56, no. 4, 2009.
}
\begin{lemma}(~\cite{GUV07})
For any $k >0$ and $\alpha>0$, there is an explicit construction of a $(k,a)$-expander $\delta$-regular bipartite graph with
$$m=\delta^2k^{1+\alpha}, \ \ \delta=O(((\log k)(\log n)/\epsilon)^{1+1/\alpha}), \ \ a=(1-\epsilon)\delta.$$
\end{lemma}
This result combined with the algorithm from Lemma (\ref{expnA}) gives:

\begin{lemma} For any $\alpha$ there is a polynomial time deterministic non-adaptive algorithm that makes
$$\left(\log\frac{D}{\Delta^2}\right)^{1+1/\alpha} \left(\frac{D}{\Delta^2}\right)^{1+\alpha}\cdot (\log n)^{1+1/\alpha}$$ tests
and outputs $\hat d$ that satisfies $$\frac{d}{\Delta} \le \hat d\le \Delta d.$$
\end{lemma}
\color{blue}
Nader: How did you get to this number of queries?? I think, for the choice $k=\frac{D}{\Delta^2}$, $\epsilon=1/2$ the query complexity should be:
$$\left( \log \frac{D}{\Delta^2}\right)^{(1+1/\alpha)^2}\left ( \frac{D}{\delta^2}\right)^{1+\alpha}\cdot (\log n)^{(1+1/\alpha)^2}.$$
\ignore{Catherine: We need to look at other constructions of expanders with small $m$.}
\color{black}
\begin{proof}
\color{blue}
Let $k=D/\Delta^2$, and $\epsilon=0.8$. Let $G=G(L,R,E)$ be the bipartite graph guaranteed by Lemma~\ref{GUV07} and let $m:=|R|$.
We construct an algorithm ${\cal A'}$ for estimating $d$. Let ${\cal A}$ be the algorithm from Lemma~\ref{expnA}. Algorithm $\cal{A'}$ runs ${\cal A}$, if ${\cal A}$ returns $0$, then ${\cal A'}$ stops and returns $D/\Delta$.
If ${\cal A}$ return $1$, then, ${\cal A'}$ outputs \color{red}????\color{blue}

The number of queries that ${\cal A'}$ makes is:
$$m=|R|=\left( \log \frac{D}{\Delta^2}\right)^{(1+1/\alpha)^2}\left ( \frac{D}{\delta^2}\right)^{1+\alpha}\cdot (\log n)^{(1+1/\alpha)^2}.$$

\end{proof}
\color{black}}
\subsection{Algorithms Using Extractors and Condensers}\label{lsection}
Extractors are functions that convert weak random sources into almost-perfect random sources. We use these objects to construct a non-adaptive algorithm for estimating $d$. We start with some definitions.

\begin{definition}
Let $X$ be a random variable over a finite set $S$. We say that $X$ has min-entropy at least~$k$ if $Pr[X = x] \le 2^{-k}$ for all $x\in S$.
\end{definition}
\begin{definition}
Let $X$ and $Y$ be random variables over a finite set $S$.
We say that $X$ and $Y$ are $\epsilon-close$ if $\max_{P\subseteq S}|\Pr[X\in P] -\Pr[Y\in P]|\leq \epsilon.$
\end{definition}
We denote by $U_\ell$ the uniform distribution on $\{0,1\}^\ell$. The notations $\Pr_{x\in B}$ or $\E_{x\in B}$ stand for the fact that the probability and the expectation are taken when  $x$ is chosen randomly uniformly from $B$.
\begin{definition}
A function $F:\{0,1\}^{\hat n}\times \{0,1\}^{\hat t}\to \{0,1\}^{\hat m}$ is a  $k\to_\epsilon k'$ condenser if for every $X$ with min-entropy at least $k$ and $Y$ uniformly distributed on $\{0,1\}^{\hat t}$, the distribution of $(Y,F(X,Y))$ is $\epsilon$-close to a distribution $(U_{\hat t},Z)$ with min-entropy $\hat t+k'$. A condenser is called $(k,\epsilon)$-lossless condenser if $k'=k$. A condenser is called $(k,\epsilon)$-extractor if $\hat m=k'$.
\end{definition}
Let $\hat N=\{0,1\}^{\hat n}, \hat T=\{0,1\}^{\hat t}$ and $\hat M=\{0,1\}^{\hat m}$, and let $F: \hat N\times \hat T \to \hat M$ be a $k\to_\epsilon k'$ condenser. Consider the $2^{\hat t}\times 2^{\hat n}$ matrix ${\cal M}$ induced by $F$.  That is, for  $r\in \hat T$ and $s\in \hat N$, the entry ${\cal M}_{r,s}$ is equal to $F(s,r)$. For $s\in \hat N$, let ${\cal M}^{(s)}$ be the $s$th column of ${\cal M}$. Then, ${\cal M}^{(s)}_r={\cal M}_{r,s}=F(s,r)$.
\begin{definition}
Let $\Sigma$ be a finite set. An $n$-mixture over $\Sigma$ is an $n-$tuple ${\cal S}:= (S_1, \cdots, S_n)$ such that for all $i\in[n]$, $S_i\subseteq \Sigma$.
\end{definition}
Using these definitions and notations, we restate the result proved by Cheraghchi~\cite{MC08} (Theorem 9) in the following lemma.

\begin{lemma}\label{Mahdi} Let $F:\{0,1\}^{\hat n}\times \{0,1\}^{\hat t}\to \{0,1\}^{\hat m}$ be a $k\to_\epsilon k'$ condenser. Let ${\cal M}$ be the matrix induced by $F$. Then, for any $2^{\hat t}-$mixture ${\cal S}=(S_1, \cdots, S_{2^{\hat t}})$ over $\hat M:=\{0,1\}^{\hat m}$, the number of columns $s$ in ${\cal M}$ that satisfies
$$\underset{{r\in\hat T}}{\Pr}[{\cal M}^{(s)}_r\in S_r]>\frac{\E_{r\in\hat T}[|S_r|]}{2^{k'}}+\epsilon$$ is less than $2^k$.
\end{lemma}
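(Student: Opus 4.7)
\medskip

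\noindent\textbf{Proof plan for Lemma~\ref{Mahdi}.}

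The plan is to argue by contradiction, turning a large collection of ``bad'' columns into a flat high-min-entropy source that witnesses a violation of the condenser property. Suppose there are at least $2^k$ columns $s\in\hat N$ for which
$$\Pr_{r\in\hat T}\bigl[{\cal M}^{(s)}_r\in S_r\bigr]\;>\;\frac{\E_{r\in\hat T}[|S_r|]}{2^{k'}}+\epsilon.$$
Pick any subset $B\subseteq \hat N$ of exactly $2^k$ such columns and let $X$ be the uniform distribution on $B$. Then $\Pr[X=x]\le 2^{-k}$ for every $x$, so $X$ has min-entropy at least~$k$, and thus by the condenser hypothesis the distribution of $(Y,F(X,Y))$, with $Y=U_{\hat t}$, is $\epsilon$-close to some $(U_{\hat t},Z)$ of min-entropy $\hat t+k'$ on $\hat T\times\hat M$.

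Next I would choose the right ``distinguishing event'' to exploit both sides. Let
$$T\;:=\;\bigl\{(r,z)\in\hat T\times\hat M\ :\ z\in S_r\bigr\}.$$
On one hand, because $X$ is uniform on $B$,
$$\Pr\bigl[(Y,F(X,Y))\in T\bigr]\;=\;\E_{s\in B}\Pr_{r\in\hat T}\bigl[{\cal M}^{(s)}_r\in S_r\bigr]\;>\;\frac{\E_{r\in\hat T}[|S_r|]}{2^{k'}}+\epsilon,$$
by our standing assumption on every $s\in B$. On the other hand, since $(U_{\hat t},Z)$ has min-entropy $\hat t+k'$, each atom of $\hat T\times\hat M$ carries probability at most $2^{-(\hat t+k')}$, so
$$\Pr\bigl[(U_{\hat t},Z)\in T\bigr]\;\le\;\frac{|T|}{2^{\hat t+k'}}\;=\;\frac{\sum_{r\in\hat T}|S_r|}{2^{\hat t}\cdot 2^{k'}}\;=\;\frac{\E_{r\in\hat T}[|S_r|]}{2^{k'}}.$$
Subtracting, the statistical-distance witness $T$ achieves a gap strictly greater than $\epsilon$, contradicting $\epsilon$-closeness.

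The only potentially delicate step is the choice of $T$: one has to recognize that the mixture $(S_r)_{r\in\hat T}$ is designed to play the role of a single subset of $\hat T\times\hat M$ that simultaneously captures the per-column excess and is small enough (in counting measure) to be handled by the min-entropy bound on $Z$. Once $T$ is fixed, both calculations are a direct unpacking of definitions. I do not foresee a serious obstacle beyond this packaging; in particular the cardinality $2^k$ of $B$ is precisely what makes $X$ have min-entropy $\ge k$, so the numerology matches with no slack to tune.
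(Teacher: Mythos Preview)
Your proof is correct. The paper does not actually supply its own proof of this lemma: it merely restates Theorem~9 of Cheraghchi~\cite{MC08} and cites that reference. Your argument --- take a flat source on $2^k$ ``bad'' columns, apply the condenser guarantee, and use the event $T=\{(r,z):z\in S_r\}$ as a statistical-distance distinguisher whose probability under any min-entropy $\hat t+k'$ distribution is at most $|T|/2^{\hat t+k'}=\E_{r}[|S_r|]/2^{k'}$ --- is exactly the standard proof of this fact and matches what is in~\cite{MC08}.
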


Equipped with Lemma~\ref{Mahdi}, we prove: 
\begin{lemma}\label{Main2} If there is a $k\to_\epsilon k'$ condenser $F:\{0,1\}^{\hat n}\times \{0,1\}^{\hat t}\to \{0,1\}^{\hat m}$ then, there is a deterministic non-adaptive algorithm ${\cal A}$ for $n=2^{\hat n}$ items that makes $m=2^{\hat t+\hat m}$ tests and satisfies the following.
\begin{enumerate}
\item If the number of defectives is less than $(1-\epsilon)2^{k'}$ then ${\cal A}$ outputs $0$.
\item If the number of defectives is greater than or equal $2^k+1$ then ${\cal A}$ outputs $1$.
\end{enumerate}
\label{cnstT}
\end{lemma}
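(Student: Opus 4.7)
The plan is to construct the tests directly from the matrix $\mathcal{M}$ induced by $F$, using one test per (row, symbol) pair. Specifically, for every $(r,w) \in \hat T \times \hat M$, I would define the test $Q_{r,w} := \{s \in \hat N : F(s,r) = w\}$. This produces exactly $2^{\hat t + \hat m} = m$ tests. Identifying items with $\hat N$ so that $n = 2^{\hat n}$, observe that within a fixed row $r$ the family $\{Q_{r,w}\}_{w \in \hat M}$ partitions $\hat N$; hence for a defective set $I$, the tests in row $r$ that return $1$ are precisely $\{Q_{r,w} : w \in S_r(I)\}$, where $S_r(I) := \{F(s,r) : s \in I\}$. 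Therefore the total number of positive tests is $P := \sum_{r\in\hat T}|S_r(I)|$, and the algorithm's decision rule will be: output $1$ if $P \geq (1-\epsilon)2^{\hat t + k'}$, else output $0$.

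For the ``small $|I|$'' side, I would use the trivial bound $|S_r(I)| \le |I|$ for every $r$: if $|I| < (1-\epsilon)2^{k'}$, then $\mathbb{E}_{r \in \hat T}[|S_r(I)|] < (1-\epsilon)2^{k'}$, which gives $P < (1-\epsilon)2^{\hat t + k'}$, so the algorithm correctly outputs $0$.

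The ``large $|I|$'' side is where Lemma~\ref{Mahdi} is invoked, and this is the main step. I would feed the mixture $\mathcal{S} := (S_r(I))_{r \in \hat T}$ into Lemma~\ref{Mahdi}. The key observation is that every column $s \in I$ is trivially ``heavy'': by definition $\mathcal{M}_r^{(s)} = F(s,r) \in S_r(I)$ for \emph{every} $r$, so $\Pr_{r \in \hat T}[\mathcal{M}^{(s)}_r \in S_r(I)] = 1$ for each $s \in I$. I would then argue by contrapositive: were $\mathbb{E}_{r}[|S_r(I)|] < (1-\epsilon)2^{k'}$, then $1 > \mathbb{E}_r[|S_r(I)|]/2^{k'}+\epsilon$ and all $|I| \geq 2^k+1$ items of $I$ would satisfy the strict heaviness inequality of Lemma~\ref{Mahdi}, producing at least $2^k+1$ heavy columns and contradicting the bound of strictly less than $2^k$. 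Hence $\mathbb{E}_{r}[|S_r(I)|] \geq (1-\epsilon)2^{k'}$, so $P \geq (1-\epsilon)2^{\hat t+k'}$ and the algorithm outputs $1$.

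The only conceptual hurdle is choosing the right mixture to feed into Lemma~\ref{Mahdi}. Once one sees that taking $S_r := F(I,r)$ makes every defective column heavy with probability $1$, the rest reduces to a clean application of the condenser counting bound together with the trivial inequality $|S_r(I)| \le |I|$; fixing the precise threshold $(1-\epsilon)2^{\hat t+k'}$ and checking both directions is then essentially bookkeeping.
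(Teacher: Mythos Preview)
Your proof is correct and uses the same test construction and the same key input to Lemma~\ref{Mahdi} (namely the mixture $S_r=F(I,r)$, which makes every defective column heavy with probability~$1$) as the paper. The one genuine difference is the decision rule. The paper, after observing the answer vector $y$, counts the number of columns $s$ for which $\Pr_{r\in\hat T}[\mathcal{M}^{(s)}_r\in S_r]=1$ and compares that count to $2^k+1$: when $|I|<(1-\epsilon)2^{k'}$ the threshold in Lemma~\ref{Mahdi} drops below~$1$, forcing fewer than $2^k+1$ such columns; when $|I|\ge 2^k+1$ the defective columns themselves already give at least $2^k+1$. You instead use the total number of positive tests $P=\sum_r|S_r(I)|$ and threshold it at $(1-\epsilon)2^{\hat t+k'}$, deriving the large-$|I|$ direction by the contrapositive of Lemma~\ref{Mahdi}. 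Your statistic is simpler to compute (just the Hamming weight of $y$, no per-column scan), while the paper's rule is perhaps closer in spirit to how condenser/extractor arguments are usually phrased; logically the two routes are equivalent rearrangements of the same inequality $\mathbb{E}_r[|S_r|]/2^{k'}+\epsilon<1$.
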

\begin{proof}
Consider the matrix ${\cal M}$ induced by the condenser $F$ as explained above. We define the test matrix ${\cal T}$ from ${\cal M}$ as follows. Let $x\in \{0,1\}^{\hat m}$. Define $e(x)\in \{0,1\}^{2^{\hat m}}$ such that $e(x)_y=1$ if and only if $x=y$, where the bits in $e(x)$ are indexed by the elements of $\{0,1\}^{2^{\hat m}}$. Each row  $r$ in the matrix ${\cal M}$ is replaced by $2^{\hat m}$ rows (in ${\cal T}$) such that in each entry ${\cal M}_{r,s}\in \{0,1\}^{\hat m}$ is replaced by the column vector $e({\cal M}_{r,s})^T\in \{0,1\}^{2^{\hat m}}$. The rows of the matrix ${\cal T}$ are indexed by $\hat T \times \hat M$. Let ${\cal T}^{(i)}$ denote the $i$th column of ${\cal T}$. Therefore, for $r\in \hat T$ and $j\in \hat M$, the row $(r,j)$ in the matrix ${\cal T}$ is denoted by ${\cal T}_{(r,j)}$. \ignore{  as described in Figure (\ref{digCnst}). }
Moreover, the $i$th entry of the row ${\cal T}_{(r,j)}$ is denoted by ${\cal T}_{(r,j),i}$ and ${\cal T}_{(r,j),i}={\cal T}^{(i)}_{(r,j)}=1$ if and only if ${\cal M}_{r,i}=j$.   The size of the test matrix ${\cal T}$ is $m\times n$. 

Let the defective elements be $s_{i_1},\ldots,s_{i_\ell}$ and let $y\in \{0,1\}^{m}$  indicate the tests result. Then, $y$ is equal to ${\cal T}^{(s_{i_1})}\vee \cdots\vee {\cal T}^{(s_{i_\ell})}$. Let ${\cal S}=(S_r)_{r\in \hat T}$ be a $2^{\hat t}-$mixture over $\{0,1\}^{{\hat m}}$ where for all $r\in \hat T$, $S_r=\{j\in \{0,1\}^{{\hat m}} |y_{(r,j)}={\cal T}^{(s_{i_1})}_{(r,j)}\vee \cdots\vee {\cal T}^{(s_{i_\ell})}_{(r,j)} =1\}$. It is easy to see that:

\begin{enumerate}
\item $|S_r|\le \ell$. This is because, by the definition of $S_r$,  $j\in S_r$ if and only if $y_{(r,j)} =1$. The entry $y_{(r,j)}$ gets the value $1$ if at least one of the entries ${\cal T}_{(r,j)}^{(s_{i_1})}, \cdots,{\cal T}_{(r,j)}^{(s_{i_\ell})}$ is $1$. Any row in ${\cal T}^{(s_{i_1})}, \cdots, {\cal T}^{(s_{i_\ell})}$ has exactly one entry that is equal to $1$ in all the $2^{\hat m}$ rows indexed by $r$. Hence, each row can cause one item to be inserted to $S_r$.
\item For any $s_{i_j}\in \{s_{i_1},\ldots,s_{i_\ell}\}$,  we have $\Pr_{r\in \hat T}[{\cal M}^{(s_{i_j})}_r\in S_r]=1$
\item Given the matrix ${\cal M}$, its test matrix ${\cal T}$ and the observed result $y$, for any column $s$ the probability $\Pr_{r\in \hat T}[{\cal M}^{(s)}_r\in S_r]$  can be easily computed.
\end{enumerate}

If the number of defectives is less than $(1-\epsilon)2^{k'}$ then, by Lemma~\ref{Mahdi}, all columns, except for at most $2^k$ columns, satisfy
$$\underset{{r\in\hat T}}{\Pr}[{\cal M}^{(s)}_r\in S_r]\le \frac{\E_{r\in\hat T}[|S_r|]}{2^{k'}}+\epsilon< \frac{\E_{r\in\hat T}[(1-\epsilon)2^{k'}]}{2^{k'}}+\epsilon=1.$$
So for less than $2^k+1$ columns we have $\Pr_{r\in \hat T}[{\cal M}^{(s)}_r\in S_r]=1$.
If the number of defectives is greater than or equal $2^{k}+1$, then for the columns of the defectives we have $\Pr_{r\in \hat T}[{\cal M}^{(s)}_r\in S_r]=1$. So for more than $2^k$ columns we have $\Pr_{r\in \hat T}[{\cal M}^{(s)}_r\in S_r]=1$.

\ignore{

 such that corresponds to ${\cal S}=S_{1}\times S_{2}\times \cdots \times S_{2^{\hat t}}\subset \Sigma^{2^{\hat t}}$ where $S_r=\{{\cal M}_r^{(s_1)},\ldots, {\cal M}_r^{(s_\ell)}\}$. It is easy then to see than

the  For each entry $x\in \{0,1\}^{\hat m}$ to the zero-one column $e_x$ that is equal to $0$ except for the entry $x$ (equal $1$). The size of the matrix is $m\times n$. Notice that if the defectives are $s_1,\ldots,s_\ell$ then the column of results of the tests is ${\cal M}^{(s_1)}\vee \cdots\vee {\cal M}^{(s_\ell)}$ corresponds to ${\cal S}=S_1\times S_2\times \cdots \times S_{2^{\hat t}}\subset \Sigma^{2^{\hat t}}$ where $S_r=\{{\cal M}_r^{(s_1)},\ldots, {\cal M}_r^{(s_\ell)}\}$. It is easy then to see than}
\end{proof}

The following Lemma summarises the state of the art result due to Guruswami et. al. ~\cite{GUV07} on explicit construction of expanders.

\begin{lemma}\label{GUV07} For all positive integers $\hat n,k$ such that $\hat n\geq k$, and all $\epsilon >0$, there is an explicit $(k,\epsilon)$ extractor $F:\{0,1\}^{\hat n}\times \{0,1\}^{\hat t}\to \{0,1\}^{\hat m}$ with $\hat t = \log {\hat n}+O(\log{k}\log{(k/\epsilon)})$ and $\hat m =k'= k -2\log {1/\epsilon}-c$ for some constant $c$.
\end{lemma}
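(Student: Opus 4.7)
The plan is to follow the Guruswami--Umans--Vadhan strategy: build an explicit unbalanced bipartite expander from Parvaresh--Vardy (PV) list-decodable codes and then convert it into an extractor via standard reductions. View $F$ as the neighbor function of a bipartite graph $G$ with left side $\{0,1\}^{\hat n}$ and right side $\{0,1\}^{\hat t+\hat m}$, where the $2^{\hat t}$ neighbors of a left vertex $x$ are $\{(y,F(x,y)):y\in\{0,1\}^{\hat t}\}$. The known dictionary between bipartite expanders and extractors/condensers says that lossless (or nearly lossless) vertex expansion of $G$ on every set of size at most $2^k$ translates, via the Leftover Hash Lemma, into the $(k,\epsilon)$-extractor property with optimal entropy loss $2\log(1/\epsilon)+O(1)$. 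It therefore suffices to produce such an expander with the claimed left degree $2^{\hat t}$ and right vertex count $2^{\hat t+\hat m}$.

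For the construction I would identify each $x\in\{0,1\}^{\hat n}$ with a polynomial $f_x(X)$ of degree less than $\hat n/\log q$ over a field $\mathbb{F}_q$, where $q$ is a prime power chosen so that $\log q=\log\hat n+O(\log k\log(k/\epsilon))$, and set
\[
F(x,y)=\bigl(f_x(y),\,f_x^{h}(y)\bmod p(y),\,\ldots,\,f_x^{h^{s-1}}(y)\bmod p(y)\bigr),
\]
where $p(X)$ is a fixed irreducible polynomial, $h$ is an integer parameter, and $s$ is the ``folding'' parameter of the PV code. The seed $y\in\mathbb{F}_q$ costs $\hat t=\log q$ bits, and $s$ is tuned so that the full output length equals $\hat m=k-2\log(1/\epsilon)-c$ after the post-processing step described below. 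The key feature making this choice work is that neighbors of distinct $x,x'\in\{0,1\}^{\hat n}$ essentially encode the PV codewords of $f_x$ and $f_{x'}$.

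The main obstacle is proving the expansion: for every $S\subseteq\{0,1\}^{\hat n}$ with $|S|\le 2^k$, the neighborhood $\Gamma(S)$ has size at least $(1-\epsilon)\cdot 2^{\hat t}|S|$. I would argue the contrapositive by list decoding: if expansion fails, then on a non-negligible fraction of seeds $y$ many vertices of $S$ share the same right neighbor, which means that many PV codewords agree with a single fixed ``received word'' on too many positions. This contradicts the PV list-decoding bound, whose proof reduces the list of close codewords to the roots of a single low-degree bivariate polynomial over $\mathbb{F}_q[X]$ and bounds that root set algebraically. Fine-tuning $q, s, h$ so that the list-decoding radius just barely fails to accommodate a set $S$ of size exceeding $2^k$ delivers the claimed $(1-\epsilon)$-expansion.

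To obtain the precise output length $\hat m=k-2\log(1/\epsilon)-c$ in the statement, I would post-compose the lossless expander/condenser above (whose output has min-entropy close to $k$ inside a block of length slightly larger than $k$) with a seeded extractor optimized for sources of near-full min-entropy, for instance the one coming from a $2$-universal hash family. This step contributes exactly the optimal entropy loss $2\log(1/\epsilon)+O(1)$ and uses a seed short enough to fit inside the overall $\log\hat n+O(\log k\log(k/\epsilon))$ budget. Explicitness is immediate since the PV encoding, the irreducible polynomial $p(X)$, and the final hash are each computable in time polynomial in $\hat n$, $k$, and $1/\epsilon$.
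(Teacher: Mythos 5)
The paper does not actually prove this lemma: it is quoted verbatim as a black-box result from Guruswami, Umans and Vadhan, so the only meaningful comparison is with the original GUV proof. Your outline does follow the genuine GUV route (Parvaresh--Vardy codes, a list-decoding argument for the expansion of the induced unbalanced bipartite graph, then a composition step to trim the output), and the expander part of your sketch is essentially the right argument. For the purposes of this paper, though, no proof is expected here at all.

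That said, your sketch has two concrete gaps you should be aware of. First, lossless vertex expansion of all sets of size at most $2^k$ gives a lossless \emph{condenser}, not an extractor; the Leftover Hash Lemma plays no role in that equivalence, and the output of a lossless condenser is close to a distribution of high min-entropy, not to uniform, so the extractor property does not follow ``via the LHL'' from expansion. Converting the condenser into an extractor is a genuinely separate composition step. Second, and more seriously, your proposed final step --- applying a $2$-universal hash family to the condensed source --- requires a seed whose length is comparable to the length of the source being hashed, i.e.\ on the order of $k+\log(1/\epsilon)$ bits. Since $k$ can be polynomially large in $\hat n$, this blows the claimed seed budget of $\log\hat n+O(\log k\log(k/\epsilon))$ by an exponential margin. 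GUV avoid this by composing their condenser with explicit extractors for sources of constant min-entropy rate that have seed length $O(\log(\hat n/\epsilon))$ (built from block-source extraction and sample-then-extract machinery); the optimal entropy loss $2\log(1/\epsilon)+O(1)$ in the statement is inherited from that component, not from a universal hash. Without replacing your last step by such an extractor, the stated parameters cannot be achieved.
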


We now prove:

\begin{lemma}
\label{thrmDNonAdp}
There is a constant $C$ such that for every $\Delta>C$, there is a polynomial deterministic non-adaptive algorithm that estimates the number of defective items in a set of $n$ items up to a multiplicative factor of $\Delta$ and asks $$O\left(\frac{D^{1+o(1)}}{\Delta^2}\log {n}\right )$$ queries.
\end{lemma}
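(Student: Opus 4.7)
The plan is to combine the polynomial-time extractor of Lemma~\ref{GUV07} with the condenser-to-tests transformation of Lemma~\ref{Main2} in order to get an efficient threshold detector, and then to feed the detector into the reduction of Lemma~\ref{fred}. I would fix the error parameter to a constant, say $\epsilon=1/2$. Lemma~\ref{GUV07} then supplies, for every $k$, an explicit $(k,1/2)$-extractor $F:\{0,1\}^{\hat n}\times\{0,1\}^{\hat t}\to\{0,1\}^{\hat m}$ with seed length $\hat t=\log\hat n+O(\log^2 k)$ and output length $\hat m=k'=k-c_0$ for an absolute constant $c_0$. Plugging this $F$ into Lemma~\ref{Main2} produces a non-adaptive algorithm on $n=2^{\hat n}$ items that uses $2^{\hat t+\hat m}$ tests, outputs $0$ whenever the number of defectives is below $2^{k'-1}$, and outputs $1$ whenever it is at least $2^{k'+c_0}+1$.

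To match the thresholds of Lemma~\ref{fred}, I would choose $k'=\lceil\log(2\ell/\Delta^2)\rceil$, so that $2^{k'-1}\ge\ell/\Delta^2$. The upper threshold then becomes at most $2^{c_0+2}\ell/\Delta^2+1$, which does not exceed $\ell/\Delta$ once $\Delta$ is larger than a fixed constant $C=C(c_0)$ (using also the standing assumption $D\ge\Delta^2$). For any such $\Delta$, the detector satisfies the hypotheses of Lemma~\ref{fred}, and the cascade of Figure~\ref{AlgT} returns an estimate $\hat d$ with $d/\Delta\le\hat d\le\Delta d$.

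For the test count, one call to the detector at level $\ell$ costs
\[
m(\ell,\Delta)=2^{\hat t}\cdot 2^{\hat m}=\hat n\cdot 2^{O(\log^2 k)}\cdot 2^{k'}=O\!\left(\frac{\ell}{\Delta^2}\cdot\log n\cdot 2^{O(\log^2\log D)}\right),
\]
because $k=k'+O(1)\le\log D+O(1)$ throughout the cascade. Since $(\log\log D)^2=o(\log D)$, the factor $2^{O(\log^2\log D)}$ is $D^{o(1)}$. Summing $m(D/\Delta^i,\Delta)$ over $i=0,1,\dots,\lceil\log D/\log\Delta\rceil$ as prescribed by Lemma~\ref{fred} yields a geometric series dominated by its first term, producing the claimed total of $O(D^{1+o(1)}\log n/\Delta^2)$ tests. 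The extractor of Lemma~\ref{GUV07} is computable in time polynomial in $n$, so the entire test matrix is assembled in polynomial time.

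The main obstacle I anticipate is the extractor's entropy loss $k-k'$. Lemma~\ref{GUV07} only guarantees $k-k'=2\log(1/\epsilon)+O(1)$, so the multiplicative gap between the two thresholds supplied by Lemma~\ref{Main2} is bounded below by a constant regardless of $\epsilon$; driving $\epsilon$ to zero only blows up the seed length without narrowing that gap. This is exactly what forces the restriction $\Delta>C$, since extending to $\Delta$ close to $1$ would require an explicit condenser with sub-constant entropy loss at these dimensions, which is not currently available.
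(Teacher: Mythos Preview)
Your proposal is correct and follows essentially the same route as the paper: instantiate the extractor of Lemma~\ref{GUV07} with a constant error parameter (the paper takes $\epsilon=2/3$, you take $\epsilon=1/2$), plug it into Lemma~\ref{Main2} to get the threshold detector, verify that the constant entropy loss $k-k'$ forces $\Delta>C$, and then cascade via Lemma~\ref{fred} with the same geometric-series test count. Your closing remark about the entropy loss being the real obstruction to $\Delta$ close to $1$ is apt and makes explicit what the paper leaves implicit.
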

\begin{proof}
We use the notations from Lemma~\ref{GUV07}. Let $C=27\cdot 2^{c-2}$. We choose $\epsilon=2/3$ and $k'$ such that $(1-\epsilon)2^{k'}=\ell/\Delta^2$. Then $$2^k=2^{k'+2\log(1/\epsilon)+c}=27\cdot 2^{c-2}\frac{\ell}{\Delta^2}< \frac{\ell}{\Delta}$$

By Lemma~\ref{Main2}, there is a deterministic non-adaptive algorithm ${\cal A}$ for $n=2^{\hat n}$ items that makes 
$$m=2^{\hat t+\hat m}=\hat n 2^{O(\log k \log (k/\epsilon))}\frac{\ell}{(1-\epsilon)\Delta^2}=2^{\log^2\log(\ell/\Delta)}\frac{\ell}{\Delta^2}\log n$$ tests that satisfies the following:
\begin{enumerate}
\item If the number of defectives is less than $(1-\epsilon)2^{k'}={\ell}/{\Delta^2}$ then ${\cal A}$ outputs $0$.
\item If the number of defectives is greater than or equal $2^k+1$ then ${\cal A}$ outputs $1$ and, since $2^k<\ell/\Delta$, in particular, if the number of defectives is greater than or equal $\ell/\Delta$ then ${\cal A}$ outputs $1$.
\end{enumerate}
By Lemma~\ref{fred}, the result follows.
\end{proof}
\ignore{
they construct a $(k,\epsilon)$-extractor with
$$\hat m=k-2\log(1/\epsilon)-O(1),\ \hat t=\log \hat n+O(\log k\log (k/\epsilon))$$

For an extractor $F:\{0,1\}^{\hat n}\times \{0,1\}^{\hat t}\to \{0,1\}^{\hat m}$, the construction described in Lemma (\ref{cnstT}) implies that the dimension of the test matrix is ${\cal T}$ is $2^{\hat t +\hat m}\times 2^{\hat n}$. Therefore, for group testing over $n$ items we need to choose an extractor with $\hat n = \log {n}$. Let $F:\{0,1\}^{\hat n}\times \{0,1\}^{\hat t}\to \{0,1\}^{\hat m}$ be the extractor from Lemma (\ref{GUV07}) where $\hat n= \log{n}$. In the paper ~\cite{GUV07} they give an explicit construction for lossless condenser (Theorems 4.3.and 4.4).
}

A similar work by Capalbo et.al. ~\cite{CRVW02} gives an explicit constrction of a lossless condenser is summarised in the following lemma:
\begin{lemma}
\label{CRVW02} For all positive integers $\hat n, k$  and all $\epsilon >0$, there is an explicit lossless condenser $F:\{0,1\}^{\hat n}\times \{0,1\}^{\hat t}\to \{0,1\}^{\hat m}$ with $\hat t = O(\log^3(\hat n/\epsilon))$ and $\hat m = k+\log(1/\epsilon)+O(1)$.
\end{lemma}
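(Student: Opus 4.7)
The statement is a deep explicit-construction result from the pseudorandomness literature, and the "proof" consists of exhibiting an efficient algorithm that, on input a description of the parameters $\hat n, k, \epsilon$, produces the truth table (or an oracle) for the required function $F$ together with a verification that $F$ is a lossless condenser with the stated seed length and output length. My plan is not to reprove the construction of \cite{CRVW02} from scratch (which occupies a large part of that paper), but to sketch the high-level strategy one would follow and identify where the real difficulty lies.

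First, I would reduce the problem to constructing an explicit unbalanced bipartite expander: a $(k,(1-\epsilon)K)$-expander on $2^{\hat n}$ left vertices with left-degree $K = 2^{\hat t}$ and $2^{\hat m}$ right vertices is essentially equivalent to a $(k,\epsilon)$-lossless condenser $F:\{0,1\}^{\hat n}\times\{0,1\}^{\hat t}\to\{0,1\}^{\hat m}$, by interpreting $F(x,y)$ as the $y$-th neighbour of the left vertex $x$. The target parameters translate to left-degree $K$ polylogarithmic in $\hat n/\epsilon$ and right side of size $2^{k}\cdot \text{poly}(1/\epsilon)$, i.e.\ expansion loss only a constant factor and almost no entropy loss.

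The main engine of the construction is the \emph{zig-zag product} of Reingold--Vadhan--Wigderson, generalised in \cite{CRVW02} to unbalanced bipartite graphs (they call this the \emph{conductor} framework). The plan is: (i) exhibit a small base conductor with strong lossless expansion, obtained either by brute-force search over constant-size parameter windows or from an algebraic object such as a Ramanujan-type graph; (ii) iterate the zig-zag product $O(\log \hat n)$ times, each step at most squaring the number of vertices while multiplying the seed length by a polylogarithmic factor and only additively increasing the entropy loss; (iii) analyse the composition to show that, after $O(\log \hat n)$ iterations, the accumulated seed length is $O(\log^3(\hat n/\epsilon))$ and the accumulated entropy loss is $\log(1/\epsilon)+O(1)$.

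The hardest step is (iii): the composition analysis. One must track simultaneously three quantities — the min-entropy deficit on the left, the expansion ratio on the right, and the error $\epsilon$ — and show that none of them degrades faster than logarithmically per zig-zag step. This requires the \emph{conductor} abstraction of \cite{CRVW02}, which generalises both extractors and lossless condensers into a common framework on which the zig-zag product acts cleanly; without that abstraction, naive composition blows up the entropy loss linearly in the number of iterations and one loses the lossless property. A secondary obstacle is that the explicit base conductor must itself achieve near-optimal parameters (otherwise the polylogarithmic seed overhead in each zig-zag step becomes too large), and constructing such a base object typically requires an auxiliary algebraic ingredient. Once these two components are in place, the polynomial-time computability of $F(x,y)$ follows by unfolding the recursion, giving the statement as claimed.
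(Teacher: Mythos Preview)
The paper does not prove this lemma at all: it is stated purely as a citation of the explicit construction of Capalbo, Reingold, Vadhan, and Wigderson~\cite{CRVW02}, used as a black box. So there is no ``paper's own proof'' to compare against.

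Your high-level sketch is a fair summary of what actually happens in~\cite{CRVW02} --- the conductor framework, the zig-zag product for unbalanced bipartite graphs, the iterated composition with a constant-size base object, and the tracking of entropy loss and seed length across $O(\log \hat n)$ iterations. If the assignment were to reproduce the CRVW construction, this would be a reasonable outline of the plan, and you correctly identify the composition analysis (step~(iii)) as the crux. But for the purposes of the present paper, no proof is required or expected: the lemma is simply quoted from the literature, and all that is needed is the reference.
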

The construction from Lemma~\ref{CRVW02} yields a result that is similar to the one established in Lemma~\ref{thrmDNonAdp}.
\ignore{
In M. Capalbo, O. Reingold, S. Vadhan, and A. Wigderson, “Randomness
conductors and constant-degree expansion beyond the degree/2 barrier,”
in Proceedings of the 34th Annual ACM Symposium on Theory of
Computing (STOC), 2002, pp. 659--668.

They gave a lossless condenser with
$$\hat t=O(\log^3(\hat n/\epsilon))\mbox{ and}\  \hat m=k+\log(1/\epsilon)+O(1).$$

This gives the same result as the above result we get from expander.

I suggest you look at the paper:

In Mahdi Cheraghchi:
Noise-resilient group testing: Limitations and constructions. Discret. Appl. Math. 161(1-2): 81-95 (2013)

and search for other extractors and condensers and see what they give}

\vskip 0.2in
%\bibliography{Ref}
\bibliography{Ref}
\bibliographystyle{plain}
\end{document}